\documentclass{article}
\usepackage{bbm}
\usepackage{float}
\usepackage{amsmath}
\usepackage{graphicx}
\usepackage{natbib}
\usepackage{amssymb}
\usepackage{amsthm}
\usepackage{mathrsfs}
\usepackage{subfigure}
\usepackage{amsfonts} 
\usepackage{docmute}
\usepackage{booktabs}
\usepackage{listings}
\usepackage[autostyle]{csquotes}
\usepackage{hyperref}
\usepackage{url} 
\usepackage{xcolor}
\usepackage{comment}
\usepackage{algorithm,algpseudocode}
\usepackage{spverbatim}
\newcommand{\<}{\langle}
\renewcommand{\>}{\rangle}
\graphicspath{ {./images/} }

\providecommand{\norm}[1]{\lVert#1\rVert}
\providecommand{\abs}[1]{|#1|}

\usepackage[inline]{enumitem}

\theoremstyle{thmstyleone}%
\newtheorem{theorem}{Theorem}
\newtheorem{proposition}[theorem]{Proposition}%
\theoremstyle{thmstyletwo}%
\newtheorem{example}{Example}%
\newtheorem{remark}{Remark}%
\theoremstyle{thmstylethree}%
\newtheorem{definition}{Definition}

\newtheorem{assumption}{Assumption}
\newtheorem{lemma}{Lemma}
\newtheorem{recipe}{Recipe}
\providecommand{\norm}[1]{\lVert#1\rVert}

\newcommand{\blind}{0}
\addtolength{\oddsidemargin}{-.5in}%
\addtolength{\evensidemargin}{-.5in}%
\addtolength{\textwidth}{1in}%
\addtolength{\textheight}{1.3in}%
\addtolength{\topmargin}{-.8in}%

\allowdisplaybreaks
\date{}
\begin{document}

\if0\blind
{
  \title{\bf Wasserstein complexity penalization priors: a new class of penalizing complexity priors}
  \author{David Bolin, Alexandre B. Simas, and Zhen Xiong \hspace{.2cm}\\
    Computer, Electrical and Mathematical Sciences and Engineering \\ Division, 
    King Abdullah University of Science and Technology\\
    Thuwal 23955-6900, Saudi Arabia}
  \maketitle
} \fi

\if1\blind
{
  \bigskip
  \bigskip
  \bigskip
  \begin{center}
    {\LARGE\bf Title}
\end{center}
  \medskip
} \fi

\bigskip
\begin{abstract}
	Penalizing complexity (PC) priors provide a principled framework for reducing model complexity by penalizing the Kullback--Leibler Divergence (KLD) between a ``simple'' base model and a more complex model. However, constructing priors by penalizing the KLD becomes impossible in many cases because the KLD is infinite, and alternative principles often lose interpretability in terms of KLD.
    We propose a new class of priors, the Wasserstein complexity penalization (WCP) priors, which replace the KLD with the Wasserstein distance in the PC prior framework. WCP priors avoid the issue of infinite model distances and retain interpretability by adhering to adjusted principles. Additionally, we introduce the concept of base measures, removing the parameter dependency on the base model, and extend the framework to joint WCP priors for multiple parameters. These priors can be constructed analytically and we have both analytical and numerical implementations in \texttt{R}. We demonstrate their use in previous PC prior applications and as well as new multivariate settings.
\end{abstract}

{\it Keywords:} Wasserstein distance, prior distributions, PC priors, weakly informative priors

\section{Introduction}
\label{sec:intro}
Priors are an integral part of the Bayesian inference procedure. 
When direct prior information is available, prior distributions are used to approximate and summarize that information.
This class of priors is known as subjective priors, see, e.g., \citet[Sections 3.2 and 4.2]{robert2007bayesian}, \citet[Section 3.2]{berger1985statistical}.
When 
one does not want the results to be influenced by prior information, 
noninformative priors are used.
There is no unified definition of these priors 
\citep[Section~3.3.1]{berger1985statistical};
however, the general idea is to give no preference to any specific part of the parameter space.
\citet{marquis1820theorie} proposed to assign a uniform distribution on the parameter space with a principle called the principle of insufficient reason, which is regarded as the first noninformative prior.
One of the criticisms of this prior \citep[Section~3.5.1]{robert2007bayesian} is the lack of parameterization invariance.
By letting the prior be proportional to the square root of Fisher information of the parameter, Jeffrey's prior \citep{jeffreys1946invariant} has the property of parameterization invariance while maintaining the idea of uniformity 
\citep[][Section~2.3.1]{kass1989geometry}.
Later, reference priors have been developed as an extension of Jeffrey's prior by formalizing the notion of an objective prior \citep{bernardo1979reference, berger1989estimating, berger1992ordered, berger1992development}. 
See \citet{consonni2018prior} for a survey of these topics.

Weakly informative priors are meant to lie between subjective and  noninformative priors. An example is the maximum entropy method \citep{jaynes1968prior,Jaynes1983PapersOP}, which is useful when partial prior information exists, such as central moments or quantiles of a prior distribution.
Another example that has received much attention recently are Penalizing Complexity (PC) priors \citep{PCpriororigin}, which we review next.
Let $M = \{\mu_\theta : \theta \in \Theta\}$ denote a set of statistical models (probability measures) indexed by a parameter $\theta \in \Theta \subset \mathbb{R}^d$, where $d \geq 1$. 
For instance, $\mu_\theta$ could represent a univariate distribution, such as a generalized Pareto distribution with tail index $\theta \in \mathbb{R}$. Alternatively, $\mu_\theta$ might represent the multivariate Gaussian distribution of an autoregressive process $\{X_t\}_{t=1}^n$. Further, let $\overline{\Theta}$ denote the set of parameters in $\mathbb{R}$ corresponding to probability distributions that can be approximated by models in $M$ according to some criteria chosen when developing the PC prior. The construction of the prior for $\theta$ is guided by the following principles:

\noindent\textbf{1. Preference for Simpler Models:} The prior should favor simpler models. Simplicity is defined in relation to a base model, \(\mu_{\theta_b}\), which is considered the simplest within the extended class of models \(\overline{M} = \{\mu_{\theta} : \theta \in \overline{\Theta}\}\). The prior assigned to \(\theta\) should decrease as the complexity of the model increases, effectively penalizing deviations from the base model.
    
\noindent\textbf{2. Complexity Measurement via KLD:} The Kullback-Leibler Divergence (KLD) \citep{kullback1951information} is used as a measure of complexity and \(d(\theta) = \text{KLD}(\mu_{\theta} \parallel \mu_{\theta_b})\) is used as a ``distance'' between a flexible model \(\mu_{\theta}\)  and  \(\mu_{\theta_b}\).
    
\noindent\textbf{3. Constant Penalization Rate:} Deviations are penalized at a constant rate, with the prior density $\pi_{d(\theta)}$ of the distance $d(\theta)$ satisfying $\pi_{d(\theta)}(d+\delta) = r^\delta \pi_{d(\theta)}(d)$ for $d, \delta > 0$ and a  decay rate $r \in (0,1)$. This leads to an exponential prior $\pi_{d(\theta)}(d) = \eta \exp(-\eta d)$, where $\eta = -\log(r)$, and a change of variables gives the prior for $\theta$.
    
\noindent\textbf{4. User-Specified Informative Parameter:} The parameter \(\eta\) is user-defined, based on prior knowledge or desired level of informativeness.

A key concept in this construction is the selection of the base model. Although this might seem like a rather arbitrary choice, in most situations, there is an obvious option to which one would prefer the prior to contract. For example, suppose that one has a random effect model $y_{ki} = X_{ki}(\beta + \beta_i) + \epsilon_{ki},$ $i=1,\ldots, N_k,$ $k=1,\ldots, K$, where the response $y_{ki}$ denotes the $i$th observation from the $k$th group. Here, $N_k$ is the number of observations in the $k$th group, $K$ is the total number of groups, and $\boldsymbol{X} = \{X_{ki}\}_{k=1, i=1}^{K, N_k}$ represents a covariate that varies across groups and observations. The parameter $\beta$ represents the average effect, while $\beta_k \sim \mathcal{N}(0, \sigma^2)$ models the between-group variations. Finally, $\epsilon_{ki}$ represents the measurement noise, assumed to be independent of $\beta_k$ and $X_{ki}$.
Suppose now that we want to assign a prior to $\sigma^2$ for the random effect. In this case, the simplest model would be that $\sigma = 0$, so that $\beta_k = 0$ and there are no between-group variations. 
This choice is the simplest also from a probabilistic perspective, as it corresponds to a base model which is a Dirac measure concentrated at zero, which is the ``simplest'' possible probability measure. 

Another example is a latent time series model, where $y_i$, $i = 1, \ldots, N$, represents a noisy observation of a Gaussian time series $X(\cdot)$ evaluated at the points $\{t_i\}_{i=1}^n$. The process $X(\cdot)$ could, for instance, be modeled as an autoregressive process, capturing temporal dependencies. 
A natural base model for the prior of the parameters of $X$ would be one so that $X(t) = X$ is a constant function, as there is no point in using the more complex time series if the data could be modelled as independent Gaussian variables with some unknown mean. One could similarly consider a stochastic process or a random field in place of the time series $X(\cdot)$. In such cases, the latent process would be indexed not only by time but potentially by multidimensional spatial or spatiotemporal coordinates.

The four principles allow for systematically constructing priors that avoid overfitting, and the strategy has been shown to provide priors with good properties in several models which are widely used in real applications. For example, \cite{PCpriororigin} propose priors for Gaussian random effects as those in the first example above, for student-t distributions where the base model is chosen as the limiting Gaussian distribution, and for multivariate probit models. \cite{sorbye2017penalised} used the framework to derive priors for autoregressive models as in the second example above. Other important examples where the framework has been used are priors for Gaussian random fields \citep{jasapcprior}, Bayesian P-splines \citep{ventrucci2016penalized}, structured additive distributional regression models \citep{10.1214/15-BA983}, and Weibull models \citep{van2021principled}. 

The intuition behind the choice of KLD as a measure of complexity is that it measures how much information is lost if a flexible model is replaced by the base model. However, there are several issues 
related to this choice that tend to be overlooked. 
The following example illustrates one of the most common issues.
\begin{example}
    \label{kld_example}
    Suppose we aim to construct a PC prior for the precision parameter $\tau = 1/\sigma^2$ of a centered Gaussian distribution $\mathcal{N}(0, \sigma^2)$. Following \citet{PCpriororigin} and based on the same reasoning as above (i.e., that the simplest model in the class is a constant model), the base model $\mu_b$ is chosen as $\mathcal{N}(0, 1/\tau_b)$ with $\tau_b = \infty$, corresponding to a Dirac measure concentrated at 0. Then 
    $$
    \text{KLD}(\mu_{\tau} \| \mu_{\tau_b}) = \frac{\tau_b}{2\tau} \left(1 + \frac{\tau}{\tau_b} \ln \left(\frac{\tau}{\tau_b}\right) - \frac{\tau}{\tau_b}\right).
    $$
    This shows that the KLD is infinite for any $0 < \tau < \infty$, making it impossible to penalize the ``distance'' using KLD. As a result, Principle 2 cannot be applied.
\end{example}
To overcome this, we introduce Principle 5 as an alternative to Principle 2.

\noindent\textbf{5. Alternative complexity measurement:} If $\text{KLD}(\mu_{\theta}, \mu_{\theta_b}) = \infty$ and \(\sqrt{2\text{KLD}(\mu_{\theta}, \mu_{\theta_b + \epsilon})}\) can be written as \(f(\theta, \epsilon) g(\epsilon)\) for ${\epsilon>0}$, with \(\lim_{\epsilon \to 0} f(\theta, \epsilon) = h(\theta, \theta_b) < \infty\) is positive and \(\lim_{\epsilon \to 0} g(\epsilon) = \infty\), then complexity is measured via \(d(\theta) = h(\theta, \theta_b)\).

Although Principle 5 is not officially listed as a principle of the PC prior framework, it is used in examples where Principle 2 fails, as discussed in \citet[Appendix A]{PCpriororigin}. Thus, we formalize it here as an additional principle. Later, we will see more examples where Principle 5 must be applied.

\begin{remark}
All distances induced by different decompositions Principle 5 are equivalent. Indeed, if another decomposition \(  \sqrt{ 2\text{KLD}(\mu_{\theta}, \mu_{\theta_b + \epsilon})} = f_1(\theta, \epsilon) g_1(\epsilon) \) exists, then \(f_1(\theta, \epsilon)/f(\theta, \epsilon) \) does not depend on \(\theta\) since \( f_1(\theta, \epsilon)/f(\theta, \epsilon) g_1(\epsilon) = g(\epsilon) \). Thus, there exists a positive constant $C<\infty$ such that 
${\lim_{\epsilon\rightarrow 0}f_1(\theta, \epsilon)/f(\theta, \epsilon) = C}.$
Therefore, penalizing \(h(\theta,\theta_b) \) is equivalent to penalizing \(h_1(\theta,\theta_b) \) because 
$$h_1(\theta, \theta_b) = \lim_{\epsilon\rightarrow 0}f_1(\theta,\epsilon) = \lim_{\epsilon\rightarrow 0} f(\theta,\theta_b) f_1(\theta, \epsilon)/f(\theta, \epsilon) = Ch(\theta,\theta_b).$$ 
\end{remark}

In Example~\ref{kld_example}, one can choose \(f(\tau,\tau_b^{-1}) = \sqrt{\frac1{\tau} \left(1 + \frac{\tau}{\tau_b} \ln \left(\frac{\tau}{\tau_b}\right) - \frac{\tau}{\tau_b}\right)} \) and $g(\tau_b^{-1}) = \sqrt{\tau_b/2}$. Hence, \(d(\tau) = \tau^{-\frac12}  \).
Thus, the resulting prior density for $\tau$ is
$
    \pi(\tau) = \frac{\eta}{2} \tau^{-\frac{3}{2}} \exp\left(-\eta \tau^{-\frac{1}{2}}\right).
$ 
A coincidence is that, as we will see later, this PC prior is actually penalizing the Wasserstein-$2$ distance, which will be introduced later.

Although Example~\ref{kld_example} is simple, it is practically relevant because specifying priors for Gaussian random effects is a common task. It also underscores a broader issue related to infinite KLD, which arises whenever the probability measures $\mu$ and $\mu_{\theta_b}$ are not absolutely continuous with respect to each other \citep[][Equation~1.1]{infinite_kld}.
If $\mu_{\theta_b}$ is indeed a simpler model, it is often not absolutely continuous with respect to $\mu$, which leads to issues with infinite KLD. Because of this, 
multiple papers constructing PC priors, such as \cite{sorbye2017penalised} and \cite{ventrucci2016penalized} have to apply Principle 5 instead of Principle 2. Further, \cite{jasapcprior} cannot directly apply either Principle 2 or Principle 5, and instead have to do other approximations to obtain a PC prior for the parameters of a Gaussian random field with a Mat\'ern covariance function.
Furthermore, it is generally difficult to interpret the meaning of \(h(\theta,\theta_b) \) in Principle 5.

Another issue with infinite KLD is the lack of a unified approach to address it. Instead of Principle 5, other alternative principles could be formulated, leading to different valid definitions of PC priors. \citet{robert2017principled} also pointed out additional challenges, including the need for further development to extend PC priors to the multivariate case. While \citet[Section 6.1]{PCpriororigin} proposed a general idea and simple cases for such extensions, no practical rule for handling general settings was provided.
A final thing to note about the principles is that the choice of base model is treated as being equivalent to the choice of the parameter $\theta$. This is not an issue for univariate priors, but may be problematic for the multivariate priors, as there then might be several values of $\theta$ which result in the same Dirac measure, which means that no single $\theta$  corresponds to the simplest model.  

Because of these issues, we propose a new type of PC priors, the Wasserstein complexity penalization (WCP) priors, which modify the four PC prior principles. Specifically, the first two principles are adjusted, and the need for a principle to address failures of Principle 2, such as Principle 5, is entirely eliminated. First, instead of selecting the base model through a specific choice of $\theta$, we introduce a base probability measure. This emphasizes that the base probability distribution, denoted by $\mu_b$, should be simpler than the other models $\mu_{\theta} \neq \mu_b$, and that the specific value of the base parameter is not necessarily relevant. Second, to avoid the problems with the KLD, the WCP priors use a penalization based on the Wasserstein distance. 
We show that these modifications solve the issues mentioned above, and that the resulting WCP priors are mathematically tractable and truly follow the stated principles, without the need for alternative principles. We also show that the WCP framework facilitates the construction of multivariate priors in a systematic way. Moreover, we provide \texttt{R} implementations for both the analytical WCP priors derived in this work as well as for numerical approximations for general WCP priors with one- and two-dimensional parameters.
%
To illustrate the flexibility of the approach, we derive WCP priors for several models where PC priors previously have been used, and compare the resulting priors. This also shows that the WCP priors are applicable in all cases where PC priors have been used, covering a large set of models which are important in a wide range of applications. 

The outline of the article is as follows. 
Section~\ref{pcw_section} contains a brief review of the Wasserstein distance and the introduction of the WCP priors for models with a single parameter. 
Section~\ref{uni_application_section} presents several applications of these priors. Section~\ref{multi_section} introduces the multivariate WCP priors, for models with multiple parameters, and Section~\ref{sec:app_multi_wcp} presents two applications of these multivariate WCP priors. The paper ends with a discussion in Section~\ref{discussion} followed by six appendices which present further technical details, and all proofs.
All results in the paper are implemented in the \texttt{R} \citep{Rsoftware} package \texttt{WCPprior}, available at \url{https://vpnsctl.github.io/WCPprior/}. The package also contains \texttt{R} and \texttt{stan} \citep{rstan} functions which can be used to implement the WCP priors in \texttt{stan} and \texttt{R-INLA} \citep{lindgren2015software}.

\section{Wasserstein complexity penalization priors}
\label{pcw_section}
The goal of this section is to introduce the WCP prior for $\theta\in\Theta\subset\mathbb{R}$ in a family of probability measures $(\mu_{\theta})_{\theta\in\Theta}$.
As the WCP prior is based on the Wasserstein distance, we begin with a brief review to their definition and main properties.

\subsection{A brief review of Wasserstein distance}\label{reviewwasser}
The Wasserstein distance can be defined in very general settings. The following definition from \citet{optransoldnew} shows how it is defined for probability measures on a metric space $(\mathcal{X},d)$, where $d$ is the metric. We will, to some extent, need this generality as $\mu_\theta$ can be anything from a univariate distribution to a Gaussian measure induced by a Gaussian random field on $\mathbb{R}^d$.
    
    \begin{definition}
        \label{wassersteindis_defi}
        Let $(\mathcal{X}, d)$ be a separable and complete metric space with the Borel $\sigma$-algebra $\mathcal{B}(\mathcal{X})$. The Wasserstein distance of order $p \in [1, \infty)$ (Wasserstein-$p$ distance) between two probability measures $\mu$ and $\nu$ on $\mathcal{X}$ is defined as
        $$
        W_p(\mu, \nu) = \left(\inf_{\pi \in \Pi(\mu, \nu)} \int_{\mathcal{X} \times \mathcal{X}} d(x, y)^p \, d\pi(x, y)\right)^{1/p},
        $$
        where $\Pi(\mu, \nu)$ is the set of  probability measures on $\mathcal{X} \times \mathcal{X}$ with marginals $\mu$ and $\nu$: Any $\pi \in \Pi(\mu, \nu)$ satisfies $\pi(A \times \mathcal{X}) = \mu(A)$ and $\pi(\mathcal{X} \times B) = \nu(B)\,\, \forall A, B \in \mathcal{B}(\mathcal{X})$.        
    \end{definition}
    
    Letting $P_p(\mathcal{X})$ denote the set of probability measures on $\mathcal{X}$ with finite $p$th moment (see Appendix~\ref{app:wasserstein}  for the precise definition), we have that the Wasserstein distance of order $p$, $W_p$, is a metric in $P_p(\mathcal{X})$. If $\mu_b = \delta_x$ is a Dirac measure concentrated on $x\in\mathcal{X}$, then $\mu_b \in P_p(\mathcal{X})$ for every $p\geq 1$. 
    Thus, whenever $\nu \in P_p(\mathcal{X})$, we have that $W_p(\mu_b,\nu)<\infty$. This is important
    since it means that we can always choose Dirac measures as base models for WCP priors, which is a common choice in the PC prior framework as they are the ``simplest'' measures corresponding to constant random variables. 
    Although $W_p(\mu,\nu)$ is generally challenging to compute, there are many cases where it can be computed analytically. See Appendix~\ref{app:wasserstein} for a discussion of this, examples and different expressions for $W_p(\mu,\nu)$.

\subsection{Univariate WCP priors}
\label{intropcw_section}
Suppose we want to assign a prior to the parameter $\theta$ in a model class $M = \{\mu_{\theta} : \theta \in \Theta\}$, where $\Theta = (\theta_{-}, \theta_{+}) \subset \mathbb{R}$ is an open interval. 
Specifying the WCP prior requires defining a base model, which should be the ``simplest'' model in the extended class $\overline{M} = \{\mu_{\theta} : \theta \in \bar{\Theta}\}$, where $\bar{\Theta} \subset [\theta_{-}, \theta_{+}]$. Let $\mu_b$ denote the base measure, and $\Theta_b$ the base parameter set, that is, we have $\mu_{\theta_b} = \mu_b$ for $\theta_b \in \Theta_b$. For simplicity, we assume $\Theta_b$ is unitary, i.e., $\Theta_b = \{\theta_b\}$. 
Define $\Theta_{-} = (\theta_{-}, \theta_b)$ and $\Theta_{+} = (\theta_b, \theta_{+})$, noting that one of these sets may be empty if $\theta_b$ lies at the boundary of the interval.
    We define 
     $$
     W_{p}^-(\theta) = \begin{cases}
        W_p(\mu_{\theta},\mu_b), & \theta \in \bar{\Theta}_-\\
        0, & \theta \in \Theta_{+}
     \end{cases},\quad
     W_p^{+}(\theta) = \begin{cases}
        0, & \theta \in \Theta_{-}\\
        W_p(\mu_{\theta},\mu_b), & \theta \in \Theta_+
     \end{cases},
     $$
    and let $c_- = \sup_{\theta\in\Theta} W_p^{-}(\theta)\geq 0$ and $c_+ = \sup_{\theta\in\Theta} W_p^{+}(\theta)\geq 0$, which can be infinite.
    %
    %
    We are now ready to give the definition of the WCP$_p$ priors. We do this by following principles similar to those of the PC priors, assigning a truncated exponential distribution as the prior of $W_p(\mu_b,\mu_\theta)$, and performing a change of variables. This yields the following definition. 
    \begin{definition}[WCP$_p$ priors]\label{pcwprior_defi}
        Suppose that $M$ satisfies certain weak regularity assumptions (Assumption~\ref{assump1} in Appendix~\ref{app:assumptions}). Then, the WCP$_p$ prior of $\theta$ has density 
        \begin{equation*}
            \pi(\theta) = W_p^{-} \frac{\eta_- e^{-\eta_- W_p^{-}(\theta)}}{1 - e^{-\eta_- c_-}} \left| \frac{\mathrm{d} W_p^{-}(\theta)}{\mathrm{d} \theta} \right| + W_p^{+} \frac{\eta_+ e^{-\eta_+ W_p^{+}(\theta)}}{1 - e^{-\eta_+ c_+}} \left| \frac{\mathrm{d} W_p^{+}(\theta)}{\mathrm{d} \theta} \right|, \quad  \theta \in \Theta,
        \end{equation*}    
        where $\eta_-,\eta_+>0$ are user-specified hyperparameters to control the tail mass and 
        $$
        W_p^{-} = \frac{1- e^{-\eta_- c_-}}{2 - e^{-\eta_- c_-} - e^{-\eta_+ c_+}}, \quad
        W_p^{+} = \frac{1- e^{-\eta_+ c_+}}{2 - e^{-\eta_- c_-} - e^{-\eta_+ c_+}}.
        $$        
    \end{definition}
    
By construction, the WCP prior satisfies the following principles, where we also include principles for choosing the base model and the order of the Wasserstein distance. In the following, $\overline{M}$ is defined as the closure of $M$ in $P_p(\mathcal{X})$, meaning that $\mu \in \overline{M}$ a sequence $(\theta_n)\subset \Theta$ exists such that $W_p(\mu_{\theta_n}, \mu) \to 0$ as $n\to\infty$.
    
\noindent \textbf{1. Contraction towards simpler measures:} The prior favors models that correspond to simpler measures, where simplicity is relative to the base measure \(\mu_b\). The prior penalizes deviations of \(\mu_\theta\) from \(\mu_b\). In cases where a Dirac measure exists in \(\overline{M}\), it must be chosen as the base measure. Notably, the base measure is independent of model parameterization.
    
\noindent \textbf{2. Complexity Measurement via the Wasserstein distance:} The Wasserstein-$p$ distance is used to measure the deviation from $\mu_b$:  \(d_p(\theta) = W_p(\mu_{\theta}, \mu_{b})\), where $p$ must be chosen such that \(d_p(\cdot)\) depends on the parameter of interest and $\overline{M}\subset P_p(\mathcal{X})$. 
    
\noindent \textbf{3. Constant directed rate penalization:} The penalization rates for deviations from the base model is constant. Specifically, the density \(\pi_{d(\theta)}\) satisfies that \({\pi_{d(\theta)}(d + \delta) = r_{\pm}^\delta \pi_{d(\theta)}(d)}\), where \(r_{\pm} \in (0,1)\) are the decay rates, \(r_+\) applies when \(\theta > \theta_{b,+}\) and \(r_-\) when \(\theta < \theta_{b,-}\), and \(\eta_{\pm} = -\log(r_{\pm})\). 
    
\noindent \textbf{4. User-defined informativeness:} The parameters \(\eta_{\pm}\) are user-specified, based on prior knowledge or the desired level of informativeness. 

The main differences between these principles and the original PC prior principles lie in the first two principles. Additionally, the fourth principle has been slightly extended, as the general WCP$_p$ prior introduces two user-specified parameters.

\begin{remark}\label{rem:wcp_oneside}
In general, the WCP$_p$ prior has two user-defined parameters. However, when $\theta_b = \theta_-$ or $\theta_b = \theta_+$, one of the sets $\Theta_-$ or $\Theta_+$ is empty, leaving only one parameter. For instance, if $\Theta_-$ is empty, then by Definition~\ref{pcwprior_defi}, the density of the WCP$_p$ prior for $\theta$ is 
$$
\pi(\theta) = 
\begin{cases}
    \frac{\eta_+ \exp(-\eta_+ W_p^{+}(\theta))}{1-\exp(-\eta_+ c_+)}\left|\frac{\mathrm{d} W_p^{+}(\theta)}{\mathrm{d} \theta}\right| & \text{if } c_+ < \infty,\\
    \eta_+ \exp(-\eta_+ W_p^{+}(\theta))\left|\frac{\mathrm{d} W_p^{+}(\theta)}{\mathrm{d} \theta}\right| & \text{if } c_+ = \infty,
\end{cases}
$$
for $\theta \in \Theta$. If $\theta_b$ is not at the boundary of the parameter space, the case of a single user-specified parameter can be recovered by setting $\eta_+ = \eta_-$.
\end{remark}

    There are three choices that the user needs to make when specifying a WCP prior:
    \begin{enumerate*}
    \item Choose the base model;
    \item Choose the penalty parameter $\eta=\eta_-=\eta_+$ (or $\eta_-$ and $\eta_+$ separately); and
    \item Choose $p$ in the Wasserstein distance.
    \end{enumerate*}
    As previously mentioned, there is typically only one choice of base model if the goal is to penalize complexity in the model. However, in certain cases, there may be more than one plausible choice \citep[see, e.g.,][]{sorbye2017penalised} and in this situation the user needs to decide on which base model that the prior should contract towards as a modelling choice. 
    The choice of the penalty parameter is, by design, application dependent, and the parameter value is chosen based on prior information. For example, $\eta$ can be chosen by specifying the prior probability that $|\theta-\theta_b|>c$ for some user specified $c>0$, which is often something the user may have prior knowledge about \citep[see, e.g.][]{PCpriororigin}.
    For the final choice of $p$, suppose that $\mu_{\theta}$ has finite $k$th moment for $k = 1,2,\ldots, K$. In this case, we must choose $p\leq K$, and $p$ must be chosen such that the Wasserstein distance depends on the parameters of interest, which may enforce a lower bound on $p$. If there are multiple values of $p$ satisfying these requirements, we typically prefer choices that provide simple and closed-form expressions of the prior. Throughout the paper, when the explicit knowledge of the order is not required, we will refer to the WCP$_p$ priors simply as the WCP priors.

    The problem of infinite KLD in the original PC prior, mentioned in the introduction, which makes it necessary to consider the alternative PC prior principles (Principles 1, 3, 4 and 5) is completely avoided in the WCP priors since $W_p$ is finite on $P_p(\mathcal{X})$. 
    Further, since $W_p$ metrizes the weak convergence of 
    probability measures in $P_p(\mathcal{X})$, the interpretability is enhanced, as we can describe the shrinkage towards the base model precisely. Finally, 
    an important feature of the $\text{WCP}_p$ priors is that they are invariant under reparameterization in the sense that the principles are still obeyed under reparameterization. More precisely, we have the following proposition, which follows directly from the definition of the WCP prior and the chain rule.
    \begin{proposition}\label{prp:invariancerepar}
        Let $g:\Theta\to (\phi_-, \phi_+)$ be an invertible and differentiable function with nonvanishing derivative. Let $\phi = g(\theta)$ be a reparameterization of the model in Definition \ref{pcwprior_defi}. 
        If $\pi(\theta)$ and $\pi(\phi)$ are the $\text{WCP}_p$ prior densities for $\theta$ and $\phi$, respectively, then 
        $\pi(\theta) = \pi(\phi) |g'(\theta)|$,
        where ${\phi = g(\theta)}$. Thus, the $\text{WCP}_p$ prior for $\phi$ is obtained by applying the change of variables ${\phi = g(\theta)}$ on the $\text{WCP}_p$ prior of $\theta$.
    \end{proposition}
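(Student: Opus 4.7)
The plan is to exploit the fact that the Wasserstein distance is an intrinsic property of the probability measures involved and does not depend on how the family is parameterized. Setting $\phi_0 = g(\theta_0)$, the reparameterized family $(\mu_\phi)_{\phi\in\Phi}$ satisfies $W_p(\mu_\phi,\mu_{\phi_0}) = W_p(\mu_\theta,\mu_{\theta_0})$ whenever $\phi=g(\theta)$. Denoting by $\tilde W_{\pm}$ the analogues of $W_\pm$ for the $\phi$-family, this observation gives $\tilde W_\pm(\phi) = W_\pm(g^{-1}(\phi))$ (up to swapping the $\pm$ subscripts if $g$ is decreasing), and in particular the saturation constants $c_\pm$ are the same in either parameterization; the user-specified hyperparameters $\eta_\pm$ are then naturally inherited (again swapping subscripts if $g$ reverses orientation), and consequently so are the weights $w_\pm$.

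Next I would verify that Assumption~\ref{assump1} lifts to the $\phi$-family, so that the WCP$_p$ prior of $\phi$ is itself well-defined via Definition~\ref{pcwprior_defi}. Since $g$ and $g^{-1}$ are continuous, weak convergence $\mu_\theta \to \mu_{\theta_0}$ in $P_p(\mathcal{X})$ as $\theta\to\theta_0$ is equivalent to $\mu_\phi \to \mu_{\phi_0}$ as $\phi\to\phi_0$, giving item 1. Because $g$ is a $C^1$ bijection with nonvanishing derivative, composing with $g^{-1}$ preserves injectivity and differentiability, so item 2 transfers to $\tilde W_\pm$. The boundary limits in item 3 carry over because $g$ maps the endpoints of $\Theta_\pm$ to the corresponding endpoints of $\Phi_\pm$ (or $\Phi_\mp$), where the constants $c_\pm$ are unchanged.

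With the WCP$_p$ prior of $\phi$ well-defined, the claim reduces to applying the chain rule. Differentiating $\tilde W_\pm(\phi) = W_\pm(g^{-1}(\phi))$ gives $\tilde W_\pm'(\phi) = W_\pm'(\theta)/g'(\theta)$, hence $|\tilde W_\pm'(\phi)| = |W_\pm'(\theta)|/|g'(\theta)|$. Substituting into formula \eqref{pcwdefinition} written in the $\phi$-parameterization and collecting terms yields $\pi(\phi) = \pi(\theta)/|g'(\theta)|$, i.e., the standard Jacobian change-of-variables identity $\pi(\theta) = \pi(\phi)|g'(\theta)|$.

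Since everything reduces to the chain rule plus the invariance of $W_p$ under relabeling of the parameter, no single step is genuinely hard; the only real care needed is the bookkeeping around the orientation of $g$, which swaps the roles of $\Theta_-$ and $\Theta_+$ (together with the hyperparameters and weights) when $g'<0$. After this relabeling, the two summands in \eqref{pcwdefinition} transform independently and the conclusion is immediate.
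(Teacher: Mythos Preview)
Your proposal is correct and follows essentially the same approach as the paper, which simply states that the result ``follows directly from \eqref{pcwdefinition} and the chain rule.'' You have supplied more detail than the paper does---in particular the careful verification that Assumption~\ref{assump1} transfers to the $\phi$-family and the bookkeeping around orientation when $g'<0$---but the core mechanism (invariance of $W_p$ under relabeling plus the chain rule applied to \eqref{pcwdefinition}) is identical.
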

    
\section{Applications of WCP priors}
\label{uni_application_section}
\subsection{A class of location-scale models}
Let \(\boldsymbol{X}\) be a random variable taking values in \(\mathbb{R}^d\), and consider the family of distributions given by \(\{\mu_{\boldsymbol{m},\sigma}: \boldsymbol{m} \in \mathbb{R}^d, \sigma > 0\}\), where \(\mu_{\boldsymbol{m},\sigma}\) is the distribution of \(\sigma \boldsymbol{X} + \boldsymbol{m}\). Given $\boldsymbol{s} \in\mathbb{R}^d$, the Wasserstein distance between \(\mu_{\boldsymbol{m},\sigma}\) and a Dirac measure \(\delta_{\boldsymbol{s}}\) can then be easily computed in terms of the moments of \(\boldsymbol{X}\) using Proposition~\ref{loc-scale-prop}. Several families of distributions belong to this class, such as Gaussian, 
exponential, logistic, half-normal, Maxwell, 
Rayleigh, etc. In this case, \(\{\mu_{\boldsymbol{m},\sigma}: \boldsymbol{m} \in \mathbb{R}^d, \sigma > 0\}\) constitutes a location-scale family of distributions which we will referred to as location-scale distributions generated by $\boldsymbol{X}$.
Let us revisit  Example~\ref{kld_example} from the introduction and derive the corresponding WCP prior for a broader class of distributions belonging to this family of distributions.
    
    \begin{proposition}\label{example1}
        Fix a random variable \( \boldsymbol{X} \) in \(\mathbb{R}^d\) with a finite \( p \)-th moment, where \( p \geq 1 \). Let \(\mu_\sigma\) denote the distribution of \(\sigma \boldsymbol{X}\), and let \(\mu_b = \delta_{\boldsymbol{0}}\) be the base measure corresponding to \(\sigma = 0\). The \(\text{WCP}_p\) prior for \(\sigma\) is then an exponential distribution with density
        $
        \pi_p(\sigma) = \eta C_p \exp(-\eta C_p \sigma),
        $
        where \( C_p = (\mathbb{E}\|\boldsymbol{X}\|_{\mathbb{R}^d}^p)^{1/p} \). Furthermore, by incorporating the finite constant $C_p$ into the user-specified parameter \(\eta\), the prior is independent of $p$ and the \(\text{WCP}_p\) prior density for \(\tau = 1/\sigma^2\) is a Type-2 Gumbel distribution with density 
        $
        {\pi_p(\tau) = \frac{1}{2} \tau^{-3/2} \eta \exp(-\eta \tau^{-1/2}).}
        $        
    \end{proposition}
    
    This result follows directly from Definition \ref{pcwprior_defi}, as by Proposition~\ref{loc-scale-prop}, \(W_p(\mu_\sigma, \mu_b) = \sigma C_p\).
    In Proposition~\ref{example1}, the Wasserstein distance between the base model and the flexible model is finite, and when $\tau \rightarrow \infty$, $\mu_\tau$  converges weakly to $\mu_b$ in $P_p(\mathbb{R}^d)$. 
    On the other hand, for any $\boldsymbol{X}$ such that $P(\boldsymbol{X} = \boldsymbol{0}) < 1$, the KLD in the original PC prior is infinite for all $0 < \tau < \infty$. By using Principle 5 in the original PC prior in place of Principle 2, the prior becomes difficult to interpret in terms of penalization with respect to KLD.
    An interesting coincidence is that if we take $\boldsymbol{X}$ following a standard normal distribution the WCP prior for \(\tau\) that we discussed in Example~\ref{kld_example} is identical to the PC prior from \citet[][Appendix A.1]{PCpriororigin}. This indicates that the  PC prior based on Principles 1,3,4 and 5 has a meaningful interpretation in terms of the Wasserstein distance.
    
    Similarly, we have the following result regarding the WCP prior for the location parameter, which follows directly from Remark~\ref{remarkwformula} and Definition \ref{pcwprior_defi}. This is a case where $\Theta_b = \{\theta_b\}$ and $\theta_b$ is in the interior of $\Theta$.
    
    \begin{proposition}
        \label{example2}
        Let \( X \) be a random variable in \(\mathbb{R}\) with a finite \( p \)-th moment, where \( p \geq 1 \). Define \(\mu_m\) as the distribution of \(\sigma X + m\), where \(\sigma^2 < \infty\) and $m\in\mathbb{R}$. Take the base measure \(\mu_b\) as the distribution of \(\sigma X\). The \(\text{WCP}_p\) prior density for \(m\) is
        \begin{equation}\label{eq:mean_prior}
            \pi_p(m) = \begin{cases}
                \frac{\eta_+}{2} \exp(-\eta_+ m) & \text{for } m > 0, \\
                \frac{\eta_-}{2} \exp(\eta_- m) & \text{for } m < 0.
            \end{cases} 
        \end{equation}
        Setting \(\eta_- = \eta_+ = \eta\) yields the Laplace prior 
            $\pi_p(m) = \frac{\eta}{2} \exp(-\eta |m|)$.
    \end{proposition}

	\subsection{Stationary autoregressive processes}
    \label{AR1pcw_section}
    Time series models are important in a number of applications, and whenever they are included in Bayesian models, there is a need to design priors for their parameters.
    In this section, we discuss WCP priors for weakly stationary auto-regressive processes of order 1, denoted by AR(1), which arguably is one of the most popular time series models \citep{chi1989models, jones1991unequally, rue2005gaussian, prado2010time}.  
    A centered weakly stationary AR(1) process is a discrete-time stochastic process, $\{X_t,t\in \mathbbm{N}\}$, defined by the recursive relation $X_0 \sim \mathcal{N}(0,\sigma^2)$ and $X_t = \phi X_{t-1} + \varepsilon_t$, for ${t = 1,\ldots,n}$,
    where $\abs{\phi}\leq 1$ and $\{\varepsilon_t\}_{t = 1}^n$ are i.i.d Gaussian $\mathcal{N}(0,\sigma^2(1-\phi^2))$, with $\varepsilon_t$ 
    being independent from $X_0$ for each $t \in \{1,2,...,n\}$. 
    In this parameterization, the marginal variances of the process are  independent of $\phi$ and the process is stationary even for $\abs{\phi} = 1$.
    
    \citet{sorbye2017penalised} proposed a PC prior for the parameter $\phi$. 
    They considered a base model for $\phi$ with two cases; either the probability distribution corresponding to $\phi_0 = 0$,  which is discrete-time white noise, or the probability distribution corresponding to $\phi_0 = 1$, which is constant in time. In both cases we have $\Theta_b = \{\phi_0\}$. 
    Here, we focus on the case \(\phi_0 = 1\) as the base model, since a constant process is inherently ``simpler'' than an AR(1) process and a white noise. In this scenario, the KLD between the base model and a flexible model with \(\phi < 1\) is infinite \citep[][Section~3.2]{sorbye2017penalised}, necessitating the usage of Principle 5 in place of Principle~2. We will now demonstrate that this is not the case for the WCP$_2$ priors.
    
    Let $\mu_{b}$ denote the Gaussian measure induced by the base model (with $\phi=1$) and $\mu_\phi$ be the measure induced by the flexible model with $\phi \in [-1,1)$. The WCP$_2$ prior for $\phi$ is given in the next proposition, whose proof is provided in Appendix~\ref{app:proofs}.

    \begin{proposition}\label{prp:AR_wcp_density}
        The $\text{WCP}_2$ prior density for $\phi$ of the centered AR(1) process, with respect to the base model $\mu_{b}$ with $\phi=1$ is given by
        $$
        {\pi(\phi) = \sigma\left|\frac{(n\phi^n-1+\phi^n-n\phi)(1-\phi) + f_n(\phi)^2}{\sqrt{2}f_n(\phi)\sqrt{n - \frac{f_n(\phi)}{1-\phi}}(1-\phi)^2} \right|\frac{\eta \exp\Bigl(-2\eta \sigma^2\Bigl(n - \frac{f_n(\phi)}{1-\phi}\Bigr)\Bigr)}{1-\exp(-\eta c)}},
        $$ 
        where 
        ${f_n(\phi) = \sqrt{n(1-\phi^2)-2\phi(1-\phi^n)}}$ and $c = \sigma\bigl(2n - \sqrt{2}\sqrt{1 - (-1)^n}\bigr)^{1/2}$.
    \end{proposition}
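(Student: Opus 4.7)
The plan is to compute $W_-(\phi) = W_2(\mu_\phi, \mu_0)$ in closed form, verify Assumption~\ref{assump1}, and plug the result into the one-sided WCP$_2$ density from Remark~\ref{rem:wcp_oneside}. The starting observation is that the base model $\phi_0 = 1$ forces $X_t = X_0$ for every $t$, so $\mu_0$ is the push-forward of $\mathcal{N}(0,\sigma^2)$ under $z\mapsto z\mathbf{1}$, with rank-one covariance $\Sigma_0 = \sigma^2 \mathbf{1}\mathbf{1}^\top$. Because $\Sigma_0$ is singular, the Gaussian closed form \eqref{w2dist2gaussianfd} does not apply directly, and I would work from the primal coupling definition. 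Every coupling of $\mu_\phi$ and $\mu_0$ is obtained by joining $X \sim \mu_\phi$ with $Z\mathbf{1}$ for $Z \sim \mathcal{N}(0,\sigma^2)$; expanding
\begin{equation*}
\mathbb{E}\norm{X - Z\mathbf{1}}^2 = \operatorname{tr}(\Sigma_\phi) + n\sigma^2 - 2\,\mathbb{E}[Z\,\mathbf{1}^\top X]
\end{equation*}
reduces the Wasserstein infimum to maximising the cross term. Cauchy--Schwarz gives $\mathbb{E}[Z\,\mathbf{1}^\top X] \le \sigma\sqrt{\mathbf{1}^\top \Sigma_\phi \mathbf{1}}$, with equality attained by the Gaussian coupling $Z = (\sigma/\sqrt{\mathbf{1}^\top \Sigma_\phi \mathbf{1}})\,\mathbf{1}^\top X$, and hence $W_2^2(\mu_\phi,\mu_0) = 2n\sigma^2 - 2\sigma\sqrt{\mathbf{1}^\top \Sigma_\phi \mathbf{1}}$.

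Next I would evaluate the quadratic form in closed form. Writing $\mathbf{1}^\top \Sigma_\phi \mathbf{1} = \sigma^2\bigl(n + 2\sum_{k=1}^{n-1}(n-k)\phi^k\bigr)$ and summing the resulting geometric and arithmetic-geometric series, after multiplying by $(1-\phi)^2$ and simplifying one obtains
\begin{equation*}
(1-\phi)^2\,\mathbf{1}^\top \Sigma_\phi \mathbf{1} = \sigma^2\bigl(n(1-\phi^2) - 2\phi(1-\phi^n)\bigr) = \sigma^2 f(\phi;n)^2,
\end{equation*}
so $W_2^2(\mu_\phi,\mu_0) = 2\sigma^2\bigl(n - f(\phi;n)/(1-\phi)\bigr)$. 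Substituting $\phi = -1$ reduces $f(-1;n)^2$ to $2(1-(-1)^n)$ and yields the stated value of $c$. Continuity in $\phi$ then gives $W_2(\mu_\phi,\mu_0) \to 0$ as $\phi \uparrow 1$, implying weak convergence in $P_2(\mathbb{R}^n)$; strict monotonicity of $\phi \mapsto \mathbf{1}^\top \Sigma_\phi \mathbf{1}$ on $[-1,1)$ (readable from the explicit formula) supplies the injectivity and differentiability required by Assumption~\ref{assump1}.

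Finally, with $g(\phi) = n - f(\phi;n)/(1-\phi)$ and $W_-(\phi) = \sqrt{2}\sigma\sqrt{g(\phi)}$, differentiating $f(\phi;n)^2$ via $2ff' = -2 - 2n\phi + 2(n+1)\phi^n$ and applying the quotient rule yields
\begin{equation*}
g'(\phi) = -\,\frac{\bigl((n+1)\phi^n - n\phi - 1\bigr)(1-\phi) + f(\phi;n)^2}{f(\phi;n)(1-\phi)^2},
\end{equation*}
so $|W_-'(\phi)| = (\sigma/\sqrt{2})\,|g'(\phi)|/\sqrt{g(\phi)}$. Plugging $W_-$, $|W_-'|$, and $c$ into the one-sided WCP$_2$ density in Remark~\ref{rem:wcp_oneside} reproduces the claimed formula. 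The principal obstacle is the singular covariance of the base model: the closed-form Gaussian Wasserstein distance is unavailable and must be replaced by the primal-coupling computation above. Once that is done, the remaining work is bookkeeping, with the summation identity $\sum_{i,j=1}^n \phi^{|i-j|} = f(\phi;n)^2/(1-\phi)^2$ being the only mildly nontrivial algebraic step.
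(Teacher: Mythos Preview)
Your argument is correct and actually more careful than the paper's on one point. The paper applies the closed-form Gaussian formula \eqref{w2dist2gaussianfd} directly to the pair $(\Sigma_\phi,\Sigma_0)$ and then computes $\operatorname{tr}\bigl((\Sigma_0^{1/2}\Sigma_\phi\Sigma_0^{1/2})^{1/2}\bigr)$, arriving at the same expression $W_2^2=2\sigma^2\bigl(n-f(\phi;n)/(1-\phi)\bigr)$. You instead observe that $\Sigma_0=\sigma^2\mathbf{1}\mathbf{1}^\top$ is singular, so \eqref{w2dist2gaussianfd} as stated (for non-singular covariances) does not immediately apply, and you derive $W_2^2$ from the primal coupling problem: parametrise $\mu_0$ as $Z\mathbf{1}$, reduce to maximising $\mathbb{E}[Z\,\mathbf{1}^\top X]$, bound by Cauchy--Schwarz, and exhibit the optimal coupling. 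This yields $W_2^2=2n\sigma^2-2\sigma\sqrt{\mathbf{1}^\top\Sigma_\phi\mathbf{1}}$, which coincides with what one gets from \eqref{w2dist2gaussianfd} once $\Sigma_0^{1/2}\Sigma_\phi\Sigma_0^{1/2}$ is recognised as the rank-one matrix $(\sigma^2/n)(\mathbf{1}^\top\Sigma_\phi\mathbf{1})\mathbf{1}\mathbf{1}^\top$. Your route is self-contained and avoids invoking an extension of \eqref{w2dist2gaussianfd} to degenerate Gaussians; the paper's route is shorter but tacitly relies on that extension. After the $W_2$ computation, both arguments proceed identically: the summation identity $\sum_{i,j}\phi^{|i-j|}=f(\phi;n)^2/(1-\phi)^2$, the evaluation of $c$ at $\phi=-1$, monotonicity to verify Assumption~\ref{assump1}, and the chain-rule differentiation feeding into Remark~\ref{rem:wcp_oneside}.
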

    The \(\text{WCP}_2\) prior differs from the PC prior presented in \citet[Section 3.2, Equation 6]{sorbye2017penalised}. Specifically, the \(\text{WCP}_2\) prior depends on the length \(n\) of the AR(1) process by obeying its four principles, whereas the PC prior manages to be independent of \(n\) by following Principles 1,3,4,5 and absorbing \(n\) into the user-specified parameter.  
    Nevertheless, being dependent of \(n\) is not a flaw of the \(\text{WCP}_2\) prior but a byproduct of its principles.
    The dependence of the \(\text{WCP}_2\) prior on \(n\) is natural if we can observe the full AR(1) process: the complexity of the base model (which is constant over time) is independent of \(n\), while the complexity of the AR(1) process increases with \(n\). Consequently, the prior reflects this dependence on \(n\). However, one drawback with the  \(\text{WCP}_2\) prior is that its density converges to a uniform distribution as $n\rightarrow\infty$, so it cannot be used for constructing a meaningful prior for an AR(1) process on $\mathbb{N}$. The reason being that the constant base model is not in $P_2(\mathbb{R}^\mathbb{N})$ and thus not in the domain of the Wasserstein-2 distance.


    \begin{figure}[t]
        \centering
        \includegraphics[scale = 0.95]{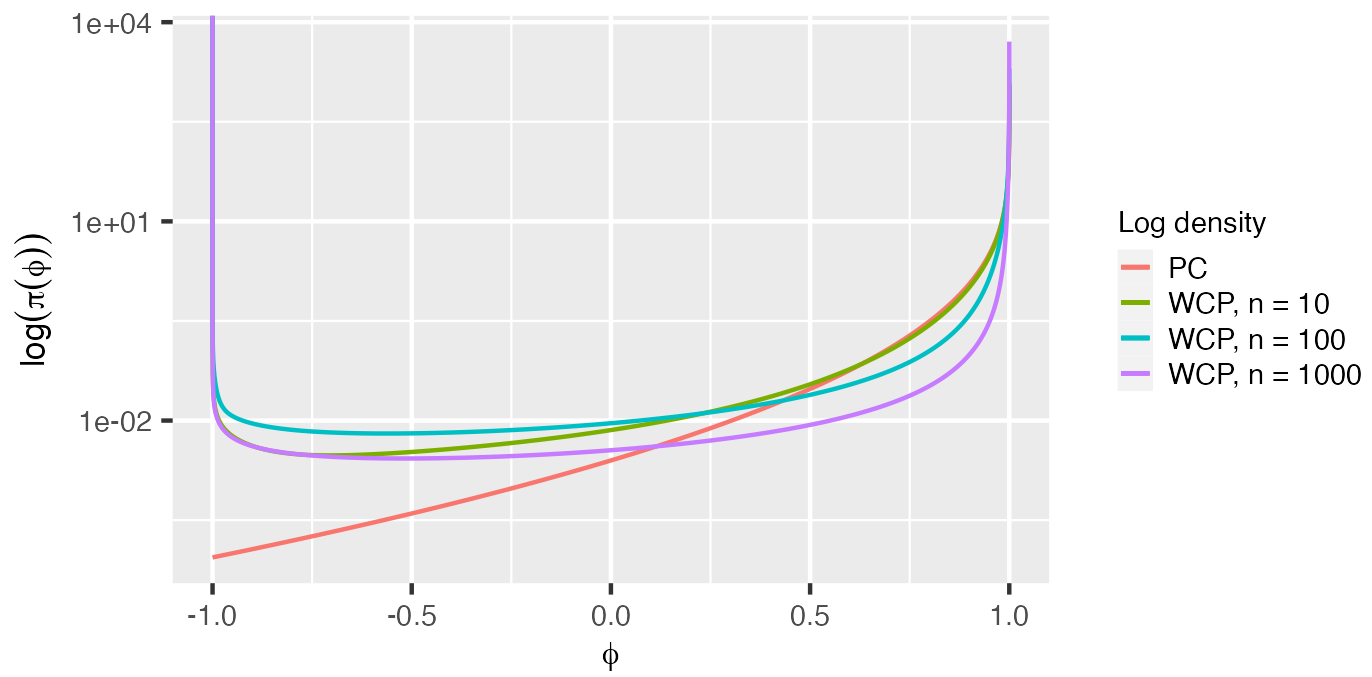}
        \caption{PC and $\text{WCP}_2$ prior densities in log scale. The user-specified hyperparameter for the $\text{PC}$ prior is $\theta \approx 7.28$.
            The corresponding parameter for the $\text{WCP}_2$ prior is $\eta \approx 13.44$ when $n = 10$, $\eta \approx 2.17$ when $n = 100$, and $\eta \approx 0.57$ when $n = 1000$.}   
        \label{fig:NCTplot}  
    \end{figure}
    A comparison between the two priors is shown in Figure \ref{fig:NCTplot}. The user-specified hyperparameters for both priors are chosen so that they satisfy $P(\phi > 0.9) = 0.9$, and $\sigma = 0.1$.
    The $\text{WCP}_2$ prior assigns less mass near the base model $\phi = 1$ than the $\text{PC}$ prior for $n=10$, and when $n$ increases, the $\text{WCP}_2$ prior becomes more concentrated around the base model.
    To further compare the priors, we performed a simulation study that compares the Maximum A Posteriori (MAP) estimations under the $\text{WCP}_2$ prior, the PC prior, and a uniform prior on $\phi$ with simulated data from an AR(1) process with $n = 10, 100, 1000$. For each value of $n$, we generated data with $\phi = 0.5$ and then computed MAP estimates. This was repeated $5000$ times, and the whole procedure was then repeated with data where the true parameter is $\phi = - 0.5$. 
    Figure~\ref{fig:sim_AR_02} shows box plots of the resulting estimates. 
    Compared to the uniform prior, the MAP estimations under both the $\text{WCP}_2$ and the PC priors are biased towards the base model when $n$ is small, while when $n$ is larger, that bias disappears.
    This is reasonable because a small value of $n$ does not provide strong evidence against the base model.
    However, as expected from the results in Figure~\ref{fig:NCTplot}, the $\text{WCP}_2$ prior has a slightly lower bias for small values of $n$, even though the user-specified parameters are chosen in the same way.  
    
    \begin{figure}[t]
        \centering
        \includegraphics[width = 0.8\linewidth]{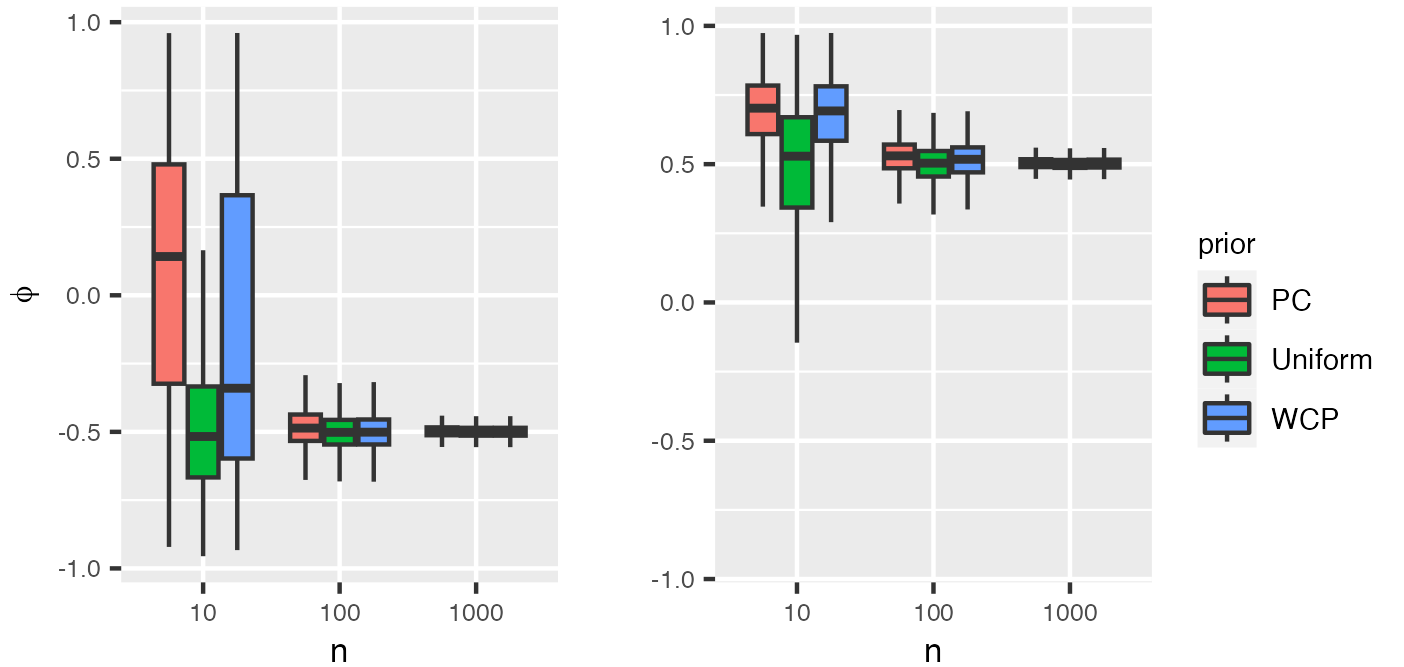}
        \caption{Distributions of MAP estimations of $\phi$ under different priors, where data is generated with $\phi = -0.5$ (left) and $\phi = 0.5$ (right).}     
        \label{fig:sim_AR_02}
    \end{figure}
    
    
    \subsection{Extreme value statistics models}
    \label{sec:tailGP}
    Extreme value statistics is an important branch of statistics, concerned with the study of extreme events. Even though traditional extreme value statistics was mainly done in a frequentist setting, it is now frequently done in a Bayesian context \citep{bousquet2021extreme, dombry2017bayesian, de2022extreme}. It is then important to design priors for the parameters, and in this section we consider this problem for one of the classical models in extreme value theory. Specifically, we will derive a prior
    for the tail index $\xi\in \mathbbm{R}$ of a generalized Pareto (GP) distribution with density
    ${f_\xi(y) = \sigma^{-1}(1+\xi y/\sigma)^{-1/\xi-1},y>0}$, where $\sigma>0$ is a scale parameter. 

    \citet{opitz2018inla} considered $\xi\in [0,1)$ since other values of $\xi$ are not realistic for modeling, and derived a PC prior for $\xi$. 
    When $\xi\in [0,1)$, the GP distribution has a finite first moment, and the associated probability measures thus belong to $P_1(\mathbb{R})$.  
    Therefore, it is natural to consider the \(\text{WCP}_1\) prior for \(\xi\), since \(W_p\) may be infinite if \(p > 1\). For instance, the GP distribution only has a finite second moment when \(\xi < 0.5\). 
    When $\xi = 0$, the GP distribution is the exponential distribution which has the lightest tail compared to other values of $\xi \in [0,1)$. Therefore, \citet{opitz2018inla} chose $\mu_b$ as the exponential distribution with density $\pi_{\xi_b}(y) = \sigma^{-1}\text{exp}(-y/\sigma), \sigma>0$. 
    
    \begin{proposition}\label{prp:gen_pareto_wcp_density}
        The $\text{WCP}_1$ prior for $\xi$ with respect to the base model induced by ${\xi_b=0}$, is
        $$
        \pi(\xi) = \frac{\eta}{(1-\xi)^2}\exp\Bigl(-\eta\frac{\xi}{1-\xi}\Bigr), \quad \xi\in (0,1),
        $$
        where $\eta>0$ is the user-specified hyperparameter controlling the tail mass.
        \begin{figure}[t]
        \centering         
        \includegraphics[scale = 1]{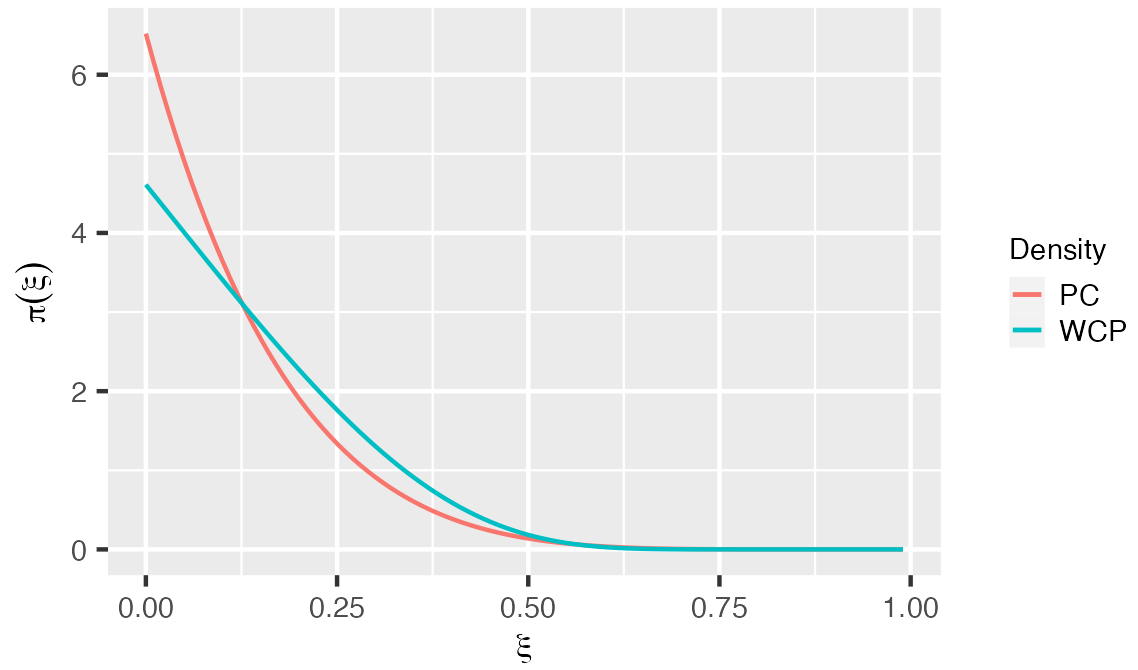}
        \caption{PC and $\text{WCP}_1$ prior densities of the tail index of the generalized Pareto distribution. The user-specified parameter for both priors is 4.61.}
        \label{fig:extreme_plot}
    \end{figure}
\end{proposition}
    
    Figure \ref{fig:extreme_plot} shows the $\text{PC}$ prior from \citet[eq. 8]{opitz2018inla} and the $\text{WCP}_1$ prior. 
    Both priors satisfy $P(\xi >0.5) = 0.01$. 
    The $\text{PC}$ prior concentrates more than the $\text{WCP}_1$ prior around the base model, which means that it has a higher penalization.



\section{Multivariate WCP priors}
\label{multi_section}
In this section, we extend the WCP priors to models with multiple parameters. 
Before introducing the WCP priors, we revisit the choice of base model which is a bit more delicate in the multivariate setting as we mentioned in the introduction. 

\subsection{Notation and preliminaries}
\label{sec:base_measure_concept}

Suppose we want to assign a prior to a parameter $\boldsymbol{\theta}\in\boldsymbol{\Theta} \subset \mathbb{R}^d$, $d \in \mathbb{N}$, for a model class  $M = \{\mu_{\boldsymbol{\theta}}: \boldsymbol{\theta} \in \boldsymbol{\Theta}\} \subset P_p(\mathcal{X})$.
Specifying the WCP$_p$ prior requires defining a base model, which should be the ``simplest'' model within an extended model class $\overline{M}$. A distribution $\mu$ belongs to $\overline{M}$ if and only if it can be approximated by models in $M$ with respect to the Wasserstein distance (see Appendix~\ref{app:basemeasure} for details).
If we have a base model $\mu_b \in \overline{M}$, we can find \(\boldsymbol{\theta}_b \in \overline{\boldsymbol{\Theta}}\) such that there exists a sequence \(\boldsymbol{\theta}_n \to \boldsymbol{\theta}_b\) and \(W_p(\mu_{\boldsymbol{\theta}_n}, \mu_{\boldsymbol{\theta}_b}) \to 0\) as \(n \to \infty\), where \(\overline{\boldsymbol{\Theta}}\) is the closure of \(\boldsymbol{\Theta}\) in $\overline{\mathbb{R}}^d$, with \(\overline{\mathbb{R}} = [- \infty, \infty]\) being the extended line. 
This allows us to define \(\mu_{\boldsymbol{\theta}_b}\) for some values of \(\boldsymbol{\theta}_b\) that are not in \(\boldsymbol{\Theta}\) but in \(\overline{\boldsymbol{\Theta}}\). Let \(\boldsymbol{\Theta}_b \subset \overline{\boldsymbol{\Theta}}\) be the set of all such parameters. We refer to this set as the base parameter set. Observe that this enables the identification 
$\overline{M} := \{\mu_{\boldsymbol{\theta}}: \boldsymbol{\theta} \in \boldsymbol{\Theta} \cup \boldsymbol{\Theta}_b\}.$
We will always assume that \(\boldsymbol{\Theta}_b \subset \mathbb{R}^d\) is a connected set, as this ensures monotonicity when moving ``away'' from the base model.

As we stated for the univariate WCP priors, whenever \(\overline{M}\) contains Dirac measures, one of these must be chosen as the base model, as they represent the simplest measures possible. 
Further, it is important to note that \(\boldsymbol{\Theta}_b\) does not need to be unitary, which is particularly relevant when \(\boldsymbol{\Theta} \subset \mathbb{R}^d\) for \(d > 1\). For example, for a model with parameters \((\sigma, \theta_2, \ldots, \theta_{d-1})\), 
$\{\boldsymbol{\theta} = (\sigma, \theta_2, \ldots, \theta_{d-1}) \in \overline{\boldsymbol{\Theta}}: \sigma = 0\} \subset \boldsymbol{\Theta}_b$. This is the main reason for considering a base measure instead of a base model with a fixed parameter.


\subsection{Definition and properties}

We will first present an informal and intuitive definition of the multivariate WCP$_p$ prior in Definition \ref{def:informal_multi}, followed by a rigorous definition in Definition \ref{multi_pcwprior_defi}.

Let $\mu_b$ be the base measure and suppose that the base parameter set $\boldsymbol{\Theta}_b$ is connected. 
Let \(W_p(\boldsymbol{\theta}) = W_p(\mu_b, \mu_{\boldsymbol{\theta}})\) denote the Wasserstein-\(p\) distance between \(\mu_{b}\) and a flexible model \(\mu_{\boldsymbol{\theta}}\) and we define \(\sup_{\boldsymbol{\theta} \in \boldsymbol{\Theta}} W_p(\boldsymbol{\theta}) = c\), where $c=+\infty$ is allowed. 
As in the univariate setting, we need a few weak regularity conditions to guarantee that the prior is well-defined. These are provided in Assumption~\ref{multi_pwc_assumptions} in Appendix~\ref{app:assumptions}.
%

\begin{definition}[Multivariate WCP priors informally]\label{def:informal_multi}
    A multivariate WCP prior for $\boldsymbol{\theta}$ is constructed by assigning a (possibly truncated) exponential distribution to the Wasserstein distance $w = W_p(\boldsymbol{\theta})$. Given $w$, a uniform distribution is assigned over the level set $S_{w,\boldsymbol{\theta}} = \{\boldsymbol{\theta} \in \overline{\boldsymbol{\Theta}} \mid W_p(\boldsymbol{\theta}) = w\}$. Thus, a complexity penalty is imposed based on $w = W_p(\boldsymbol{\theta})$, treating all models with the same $w$ equivalently. The prior for $\boldsymbol{\theta}$ is then derived through a change of variables, analogous to the univariate case.
\end{definition}
\citet[Section~6.1]{PCpriororigin} proposed a similar approach using KLD. They derived multivariate PC priors for a restricted class of models with specific forms of KLD; however, no examples were provided for more general forms and it is not common to find Wasserstein distances that satisfy the requirements in \citet[Section~6.1]{PCpriororigin} under natural model parameterizations.

Let us now move to the formal definition of the multivariate WCP priors. To facilitate the presentation, we will assume that for every $w$, $S_{w,\boldsymbol{\theta}}$ is compact, and that there exists a parameterization $X_w: U_w \subset \mathbb{R}^{d-1} \to \widetilde{S}_{w,\boldsymbol{\theta}} \subset S_{w,\boldsymbol{\theta}}$, such that $\text{Area}_{d-1}(S_{w,\boldsymbol{\theta}}\setminus \widetilde{S}_{w,\boldsymbol{\theta}}) = 0$, where $\text{Area}_{d-1}(\cdot)$ stands for the $(d-1)$-dimensional surface area, see Appendix~\ref{app:assumptions} for more details.
In the definition, $J_{\boldsymbol{g}}(\boldsymbol{x})$ denotes the jacobian matrix of a differentiable function $\boldsymbol{g}$ evaluated at $\boldsymbol{x}$.

\begin{definition}[Multivariate WCP priors]\label{multi_pcwprior_defi}
    Suppose that $M$ satisfies certain weak regularity assumptions (Assumption~\ref{multi_pwc_assumptions} in Appendix~\ref{app:assumptions}). Additionally, for each $w>0$ where $S_{w,\boldsymbol{\theta}}\neq \emptyset$, suppose that the map $(w, \boldsymbol{u}) \mapsto X_w(\boldsymbol{u})$ is a local diffeomorphism. Let $\boldsymbol{u} = (u_1,\ldots,u_{d-1})$ represent the parameters of $X_w$. Then, the WCP prior density of $w$ and $\boldsymbol{u}$ is 
    \begin{equation}
        \label{multi_pcwprior_density_no_change_of_variables}
        \pi(w,\boldsymbol{u}) = \frac{\eta\exp(-\eta w)}{1-\exp(-\eta c)}\frac{\sqrt{\det J_{X_w}(\boldsymbol{u})^\top J_{X_w}(\boldsymbol{u})}}{\text{Area}_{d-1}(S_{w,\boldsymbol{\theta}})},
    \end{equation}
    where $J_{X_w}(\boldsymbol{u})$ has size $d \times (d-1)$ and $\eta > 0$ is a hyperparameter. Now, let $\boldsymbol{\Phi}: \boldsymbol{\Theta} \to \mathbb{R}^{d}$ be the map  $\boldsymbol{\Phi}(\boldsymbol{\theta}) = (W_p(\boldsymbol{\theta}), X_{W_p(\boldsymbol{\theta})}^{-1}(\boldsymbol{\theta}))$. Then, by the change of variables induced by $\boldsymbol{\Phi}$ in \eqref{multi_pcwprior_density_no_change_of_variables}, we arrive at the WCP$_p$ prior density of $\boldsymbol{\theta}$:
    \begin{equation}
        \label{multi_pcwprior_density}
        \pi(\boldsymbol{\theta}) = |\det J_{\boldsymbol{\Phi}}(\boldsymbol{\theta})| \frac{\eta\exp(-\eta W_p(\boldsymbol{\theta}))}{1-\exp(-\eta c)}\frac{\sqrt{\det \boldsymbol{G}(\boldsymbol{\theta})}}{\text{Area}_{d-1}(S_{W_p(\boldsymbol{\theta}),\boldsymbol{\theta}})},
    \end{equation}
    where $\boldsymbol{G}(\boldsymbol{\theta}) = J_{X_{W_p(\boldsymbol{\theta})}}(X_{W_p(\boldsymbol{\theta})}^{-1}(\boldsymbol{\theta}))^\top J_{X_{W_p(\boldsymbol{\theta})}}(X_{W_p(\boldsymbol{\theta})}^{-1}(\boldsymbol{\theta}))$ and 
    $J_{X_{W_p(\boldsymbol{\theta})}}(X_{W_p(\boldsymbol{\theta})}^{-1}(\boldsymbol{\theta}))$ is $J_{X_w}(\boldsymbol{u})$ evaluated at $w=W_p(\boldsymbol{\theta})$ and $\boldsymbol{u} = X_{W_p(\boldsymbol{\theta})}^{-1}(\boldsymbol{\theta})$.
\end{definition}

We refer the reader to Appendix~\ref{app:assumptions} for the most general definition of the multivariate WCP prior, which allows for more general forms of level sets. 

\begin{remark}\label{remark:diffeomorphism}
    The map $(w, \boldsymbol{u}) \mapsto X_w(\boldsymbol{u})$ is a local diffeomorphism if the following set defined as ${O =\{(w,\boldsymbol{u}): w>0, \boldsymbol{u} \in U_w\}}$ is open in $\mathbb{R}^d$ and the map is continuously differentiable and has a non-zero Jacobian determinant for all $(w, \boldsymbol{u})\in O$. Observe that the inverse of this map is given by $\boldsymbol{\Phi}$ so it is enough to check that $\boldsymbol{\Phi}$ is a local diffeomorphism.
\end{remark}

As an example, the following result (derived in Appendix~\ref{app:proofs}) shows the bivariate $\text{WCP}_2$ prior for the mean and the standard deviation of a Gaussian distribution. 

\begin{proposition}
    \label{2d_gaussian_wcp}
    Let $\mu_{\boldsymbol{\theta}} = \mathcal{N}(m,\sigma^2)$ for $\boldsymbol{\theta} = (m, \sigma) \in \mathbb{R} \times (0,\infty)$. 
    Then, the $\text{WCP}_2$ prior of $(m,\sigma)$ is has density
    \begin{equation}
        \label{2dGaussianWCPdensity}
        \pi(m,\sigma) = \frac{\eta\exp(-\eta (m^2+\sigma^2)^{1/2})}{\pi (m^2+\sigma^2)^{1/2}}.
    \end{equation}
\end{proposition}

This proposition is derived directly from the definition of the WCP prior, as the arc lengths of the level curves (the terms $\text{Area}_1(S_{w,(\mu,\sigma)}), w>0,$ in this context) are known in closed form. An example of this prior is shown in the left panel of Figure~\ref{fig:Gaussian_2d_prior}.

In general, computing the WCP prior according to Definition~\ref{multi_pcwprior_defi} requires knowledge of the surface areas of the level sets. However, in Appendix~\ref{app:recipe}, we provide a  recipe for computing WCP priors when the surface areas are unknown. 

\subsection{The two-step approach} \label{sec:multi_prior_Gaussian}

\citet[Section 2.2]{jasapcprior} proposed a way to derive a joint PC prior of two parameters, which is commonly used in practice and which we will refer to as the two-step approach.
In this section, we will formalize a counterpart of this idea for WCP priors and compare it with the true multivariate WCP priors.

Suppose that we have two parameters $\theta_1, \theta_2$, and that the base parameter set is unitary, $\boldsymbol{\Theta}_b = \{(\theta_{1,b}, \theta_{2,b})\}$.
The first step of the two-step approach is to derive a WCP prior for one of the parameters, say $\theta_1$, while fixing $\theta_2 = \theta_{2,b}$.
That is, this WCP prior penalizes the distance between $\mu_{\theta_{1,b}, \theta_{2,b}}$ and $\mu_{\theta_{1}, \theta_{2,b}}$.
This prior is a conditional distribution of $\theta_1$ given that $\theta_2 = \theta_{2,b}$.
However, in the two-step approach,  this is treated as a prior of $\theta_1$, and is denoted by $\pi(\theta_1)$. 
The second step is to derive the conditional WCP prior $\pi(\theta_2|\theta_1)$ for $\theta_2$ given $\theta_1$, that is, the prior penalizes the distance between $\mu_{\theta_{1}, \theta_{2,b}}$ and $\mu_{\theta_{1}, \theta_{2}}$ where $\mu_{\theta_{1}, \theta_{2,b}}$ is considered as the base model. 
The two-step WCP prior density is then $\pi(\theta_1)\pi(\theta_2|\theta_1)$.
\begin{example}
	\label{2d_Gaussian_twostep}
	Let us derive the two-step approach prior for $m$ and $\sigma$ of a $\mathcal{N}(m,\sigma^2)$ distribution with $\mu_b = \delta_{(0,0)}$ as base measure. 
	We derive the $\text{WCP}_2$ prior for $m$ with $\sigma = 0$ first.
    Because $W_2(\mathcal{N}(m,0),\mu_b) = \abs{m}$, we have 
	$\pi(m|\sigma = 0) = \eta_1\exp(-\eta_1 \abs{m})$ for $m \neq 0$,
	where $\eta_1$ is a user-specified hyperparameter.
	Next, we have that $W_2(\mathcal{N}(m,\sigma),\mathcal{N}(m,0)) = \sigma$.
	Therefore, for $\sigma > 0$,
	$\pi(\sigma|m) = \eta_2\exp(-\eta_2 \sigma)$ for $\sigma > 0$,
	where $\eta_2$ is a user-specified hyperparameter.
	Combining the two steps yields the two-step prior
	$
		\pi(m,\sigma) = \frac12 \eta_1\eta_2\exp(-\eta_1\abs{m} - \eta_2 \sigma).
	$
	Figure~\ref{fig:Gaussian_2d_prior} shows the $\text{WCP}_2$  prior from \eqref{2dGaussianWCPdensity} and the two-step prior. We can note that the two priors behave very differently.
    \begin{figure}[t]
        \centering
        \includegraphics[width=0.8\linewidth]{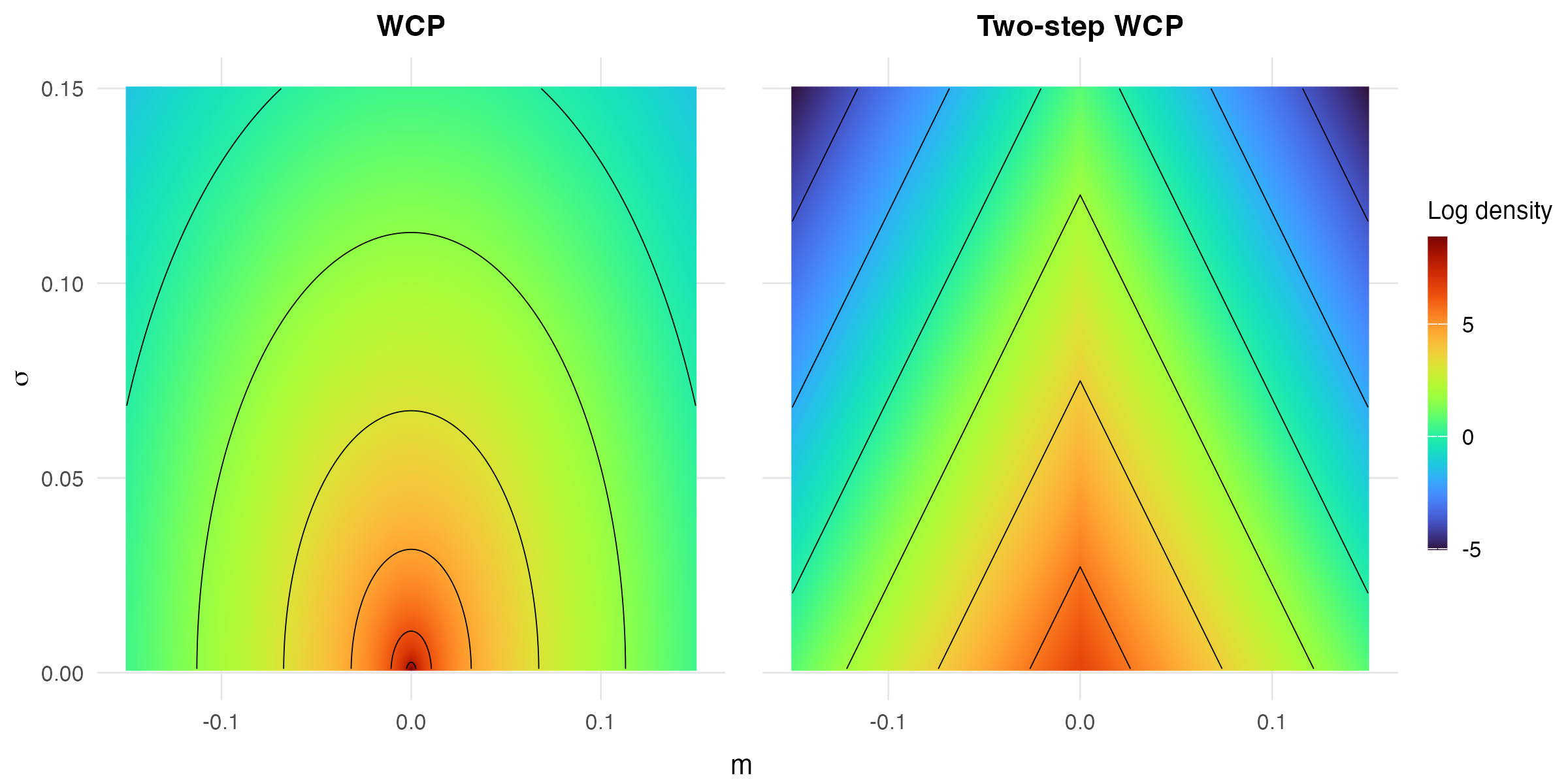}
        \caption{$\text{WCP}_2$ prior (left) with $\eta = 23$ and two-step approach prior (right) with $\eta_1=\eta_2=38.9$  for $(m,\sigma)$ for the normal distribution. The hyperparameters were chosen in such a way that for WCP prior, $P(\|(m,\sigma)\|_{\mathbb{R}^2} \geq 0.1) \approx 0.1$ and for the two-step WCP prior, $P(|m| + |\sigma| \geq 0.1) \approx 0.1$. Level curves are shown in black.}     
        \label{fig:Gaussian_2d_prior}
    \end{figure}
\end{example}

It is important to note that the order of parameters in which the two steps are performed may affect the final result of the two-step prior. However, a more significant issue arises with the two-step approach when the base parameter set is not unitary. To illustrate this, consider the case where we aim to obtain a two-step prior for \((\theta_1, \theta_2) \in \boldsymbol{\Theta}\). Suppose we have the base measure \(\mu_b\) and the base parameter set \(\boldsymbol{\Theta}_b = \{(\theta_1, \theta_2): \theta_1 = \theta_{1,b}\}\). In this scenario, the base model parameters correspond to fixing \(\theta_1\) at \(\theta_{1,b}\). This means there is no value of \(\theta_2\) that can be considered a base model value.
To proceed with the two-step approach, we must first penalize the distance between \(\mu_{\theta_{1,b}, \theta_2}\) and \(\mu_b\), and then penalize the distance between \(\mu_{\theta_1, \theta_2}\) and \(\mu_{\theta_{1,b}, \theta_2}\). However, this is not feasible, since \(\{(\theta_1, \theta_2): \theta_1 = \theta_{1,b}\} \subset \boldsymbol{\Theta}_b\), which implies that \(\mu_{\theta_{1,b}, \theta_2} = \mu_b\). Consequently, the distance between \(\mu_{\theta_{1,b}, \theta_2}\) and \(\mu_b\) is zero, leaving nothing to penalize.
The same problem occurs when \(\boldsymbol{\Theta}_b = \{(\theta_1, \theta_2): \theta_1 = \theta_{1,b} \text{ or } \theta_2 = \theta_{2,b}\}\).
In particular, there is no consistent way to define the two-step approach for the scale and tail index of the GP distribution considered in Section~\ref{sec:tailGP}, whereas the multivariate WCP prior is easily obtained (see Section~\ref{subsec:extremevalue_wcp_multi}).

\begin{figure}[t]
	\centering
	\includegraphics[width = 0.9\linewidth]{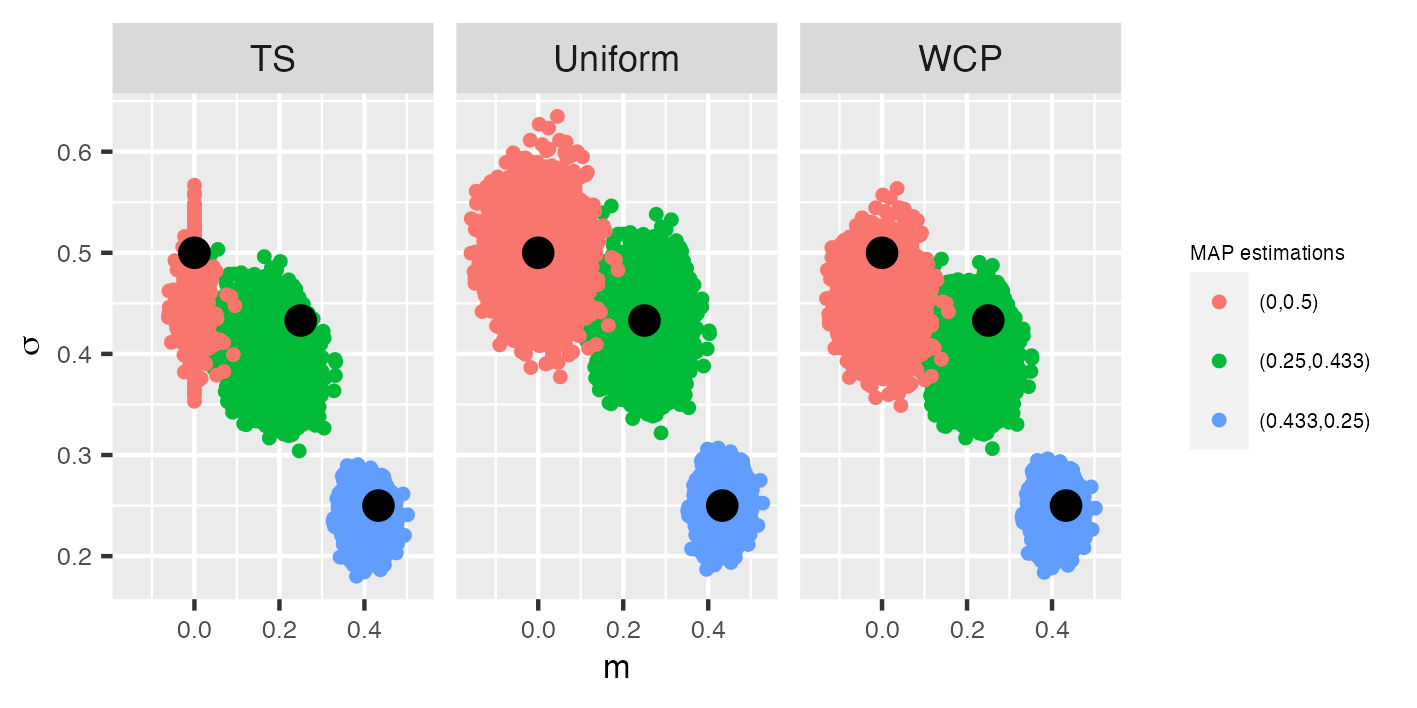}
	\caption{MAP estimations of $m$ and $\sigma$ using the two-step approach (TS), a uniform and $\text{WCP}_2$ priors. 
    Data is simulated in three scenarios using three different values of the true parameters $(m,\sigma)$. The black dots represent the true parameter values and the colored points clouds are the corresponding MAP estimates.}     
	\label{fig:sim_2dGaussian}
\end{figure}

Another drawback of the two-step approach is that it uses an approximation of the Wasserstein distance.
For example, for the Gaussian distribution parameterized by $m$ and $\sigma$, the Wasserstein distance $(m^2 + \sigma^2)^{1/2}$ is approximated by $\sigma + \abs{m}$ if $\eta_1=\eta_2$. That is, the Euclidean distance is approximated by an $L_1$ distance on $\mathbb{R}\times (0,\infty)$.
To illustrate the effect of this approximation, we compare the $\text{WCP}_2$ prior to its two-step approximation in a simulation study. 
We choose the parameters from the same level curve of the Wasserstein distance and compute their MAP estimations with 100 identically independent Gaussian data generated with the parameters. 
Figure~\ref{fig:sim_2dGaussian} shows the results based on 5000 rounds of estimations.
Compared to the uniform prior, both the $\text{WCP}_2$ prior and the two-step prior create some bias in the MAP estimates toward the base model. 
For the WCP prior, the shape of the points clouds and thus the distribution of the estimator are similar to those for the uniform prior, while for the two-step prior, they change depending on the true parameter values.
Thus, the two-step prior does not penalize equally for the same Wasserstein distance.

In Figure~\ref{fig:sim_2dGaussian}, the   hyperparameters for the two-step prior was chosen equal to that for the WCP prior.
However, one could choose the hyperparameters separately, as discussed in \citet[Section 2.2]{jasapcprior}. 
This gives the two-step approach more freedom to penalize the parameters in different ways.

\begin{remark}\label{rem:n_step_approach}
    The two-step approach can be extended to a step-wise approach for $n$ parameters as follows. Let $\boldsymbol{\theta} = (\theta_1, \ldots, \theta_n)$ be the parameter vector, which can be reordered in any convenient manner, $(\theta_{(1)}, \ldots, \theta_{(n)})$, with corresponding base model values $(\theta_{b,(1)}, \ldots, \theta_{b,(n)})$. The step-wise approach is constructed iteratively. First, compute $\pi(\theta_{(1)})$ as the WCP prior of $\theta_{(1)}$ given that $\theta_{(i)} = \theta_{b,(i)}$ for $i = 2, \ldots, n$. Next, compute $\pi(\theta_{(2)} | \theta_{(1)})$ as the WCP prior of $\theta_{(2)}$ conditioned on $\theta_{(1)}$, and $\theta_{(i)} = \theta_{b,(i)}$ for $i = 3, \ldots, n$. This process is repeated until $\pi(\theta_{(n)} | \theta_{(1):(n-1)} = \theta_{b,(1):(n-1)})$ is computed, which is the WCP prior of $\theta_{(n)}$ given $\theta_{(i)}$ for $i = 1, \ldots, n-1$. The resulting density is
    $    \pi(\theta_1, \ldots, \theta_n) = \pi(\theta_{(1)}) \prod_{i=2}^{n} \pi(\theta_{(i)} | \theta_{(1):(i-1)}).
    $
\end{remark}

\section{Applications of multivariate WCP priors}
\label{sec:app_multi_wcp}

\subsection{Bivariate prior for extreme value statistics}\label{subsec:extremevalue_wcp_multi}

In Section \ref{sec:tailGP}, we considered a $\text{WCP}_1$ prior for the tail index of the GP distribution.
We now derive the two-dimensional $\text{WCP}_1$ prior for $\xi$ and $\sigma$ of the GP distribution.
Since $\sigma$ is a scale parameter, we choose the base measure as a Dirac measure concentrated at $0$, which corresponds to $\sigma=0$. In this case  
$\boldsymbol{\Theta}_b = \{(\xi,\sigma): \sigma = 0\}$, which is an example in which the base parameter set is not unitary.

\begin{proposition}
	\label{2d_GP_WCP}
	The density of the ${WCP}_1$ prior for the parameters $(\xi,\sigma)$ of a GP distribution is
	\begin{equation}
		\label{analytic_2d_GP}
		\pi_{\sigma,\xi}(\sigma,\xi) 
		= \frac{\eta}{1-\xi}\exp\left(-\eta\frac{\sigma}{1-\xi}\right).
	\end{equation}
\end{proposition}

Figure~\ref{fig:PCW_GP_2D_density} shows two examples of the WCP$_1$ densities of $(\sigma,\xi)$ for the generalized Pareto distribution, when $\eta = 1$ and when $\eta=20$.
The prior with $\eta = 1$ concentrates near $\xi = 1$ and $\sigma = 0$.
This may seem counter-intuitive since the base model for one-dimensional $\text{WCP}_1$ prior of $\xi$ is $\xi = 0$. However, recall that the base parameter set it $\boldsymbol{\Theta}_b = \{(\xi,\sigma): \sigma = 0\}$, and note that the level curves of the Wasserstein distance are straight lines from $(\xi,\sigma) = (1,0)$ to a point on the $x$-axis. This means that the level curves are closer together for parameters close to $(\xi,\sigma) = (1,0)$. 


\begin{figure}[t]
    \centering  
    \includegraphics[width = 0.9\linewidth]{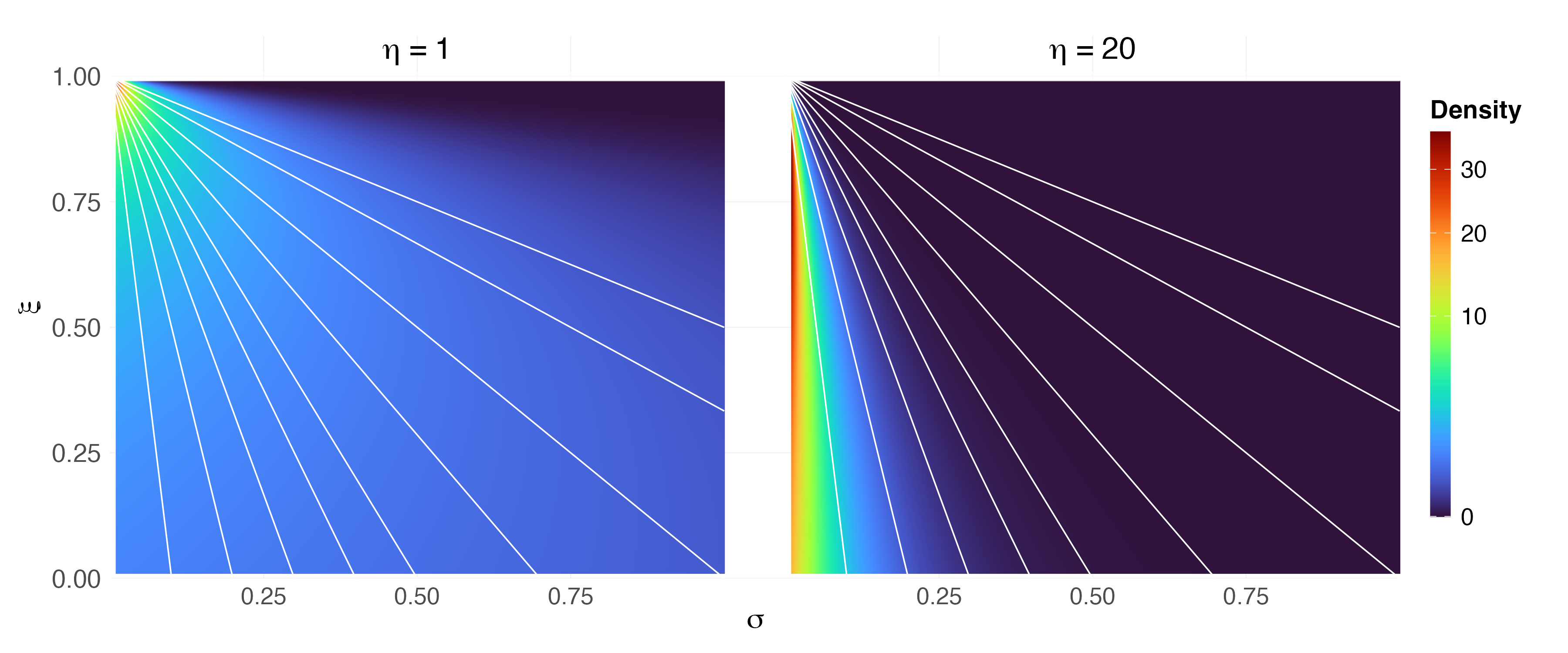}  
    \caption{$\text{WCP}_1$ prior density of $\xi$ and $\sigma$ with $\eta = 1$ (left) and $\eta = 20$ (right), along with different level curves in white. 
    }    
    \label{fig:PCW_GP_2D_density}   
\end{figure}

\subsection{WCP priors for linear regression}
\label{sec:wcp_linear_regression}
Consider the linear regression model $\boldsymbol{Y} = \boldsymbol{X} \boldsymbol{\beta} + \boldsymbol{\varepsilon}$, where $\boldsymbol{X}$ is a given $N\times n$ design matrix, $\boldsymbol{\beta} = [\beta_1, \ldots, \beta_n]^\top \in \mathbb{R}^n$ is the vector of regression coefficients, and $\boldsymbol{\varepsilon}$ is a vector of independent, centered Gaussian variables with variance $\sigma^2$, $\boldsymbol{\varepsilon} \sim \mathcal{N}(\boldsymbol{0}, \sigma^2 \boldsymbol{I}_N)$, where $N,n\in \mathbb{N}$ and $\boldsymbol{I}_N$ is the $N\times N$ identity matrix.

We begin by demonstrating a strong connection between the Bayesian lasso prior, introduced by \citet{park2008bayesian} and \citet{hans2009bayesian}, and the step-wise $\text{WCP}_2$ prior. In particular, that the Bayesian lasso can be interpreted from the perspective of the $\text{WCP}_2$ prior, providing new insights into its properties.
This setting is similar to that in Proposition~\ref{example2} but in a high dimension. Observe that the flexible model is $\mu_{\boldsymbol{\beta}} = \mathcal{N}(\boldsymbol{X} \boldsymbol{\beta}, \sigma^2 \boldsymbol{I}_N)$. To obtain the Bayesian lasso, let the base measure be $\mu_b = \mathcal{N}(\boldsymbol{0}, \sigma^2 \boldsymbol{I}_N)$, in which case we have the base parameter set $\boldsymbol{\Theta}_b = \{\boldsymbol{0} \}$.
The step-wise WCP$_2$ prior for $\boldsymbol{\beta}$, has density
$$\pi_{\boldsymbol{\beta}}(\boldsymbol{\beta}) = \prod_{i = 1}^n \frac{\eta_i}{2} \| \boldsymbol{X}_{(i)} \|_{\mathbb{R}^N} \exp \left( -\eta_i \cdot \norm{ \boldsymbol{X}_{(i)} }_{\mathbb{R}^N} |\beta_i| \right),$$
where $\boldsymbol{X}_{(i)}$ denotes the $i$th column of $\boldsymbol{X}$, and $\eta_i>0$, $i=1,\ldots,n$ are user-specified hyperparameters.
A detailed derivation can be found in Appendix~\ref{app:wcp4regression}.
Observe that the Bayesian lasso prior coincides with the step-wise $\text{WCP}_2$ prior if we set $\eta_i \norm{\boldsymbol{X}_{(i)}}_{\mathbb{R}^N} = \frac{\lambda}{\sqrt{\sigma^2}}$, $i=1,\ldots,n$. In particular, this means that, keeping $\sigma>0$ fixed, the Bayesian lasso prior is equivalent to the step-wise $\text{WCP}_2$ prior with $\eta_i = \frac{\lambda}{\sqrt{\sigma^2} \norm{\boldsymbol{X}_{(i)}}_{\mathbb{R}^N} }$, $i=1,\ldots,n$. Furthermore, this implies that the Bayesian lasso prior contracts towards the base model $\mathcal{N}(\boldsymbol{0}, \sigma^2 \boldsymbol{I}_N)$ with respect to the Wasserstein$-2$ distance. Further observe that even for the base model itself, the probability of having sparse coefficients is zero, that is, if $\boldsymbol{\beta}\sim \mathcal{N}(\boldsymbol{0}, \sigma^2 \boldsymbol{I}_N)$, then $P(\exists j \in \{1,\ldots, n\}: \beta_j = 0) = 0$. Therefore, this is not a suitable prior if the goal is to achieve sparsity. This is in consonance with the results in \citet{castillo2015bayesian}.

To obtain the WCP$_2$ prior according to Definition~\ref{def:informal_multi}, observe that 
$$W_2(\boldsymbol{\beta}) = W_2(\mathcal{N}(\boldsymbol{X} \boldsymbol{\beta}, \sigma^2 \boldsymbol{I}_N), \mathcal{N}( \boldsymbol{0}, \sigma^2 \boldsymbol{I}_N)) = \norm{\boldsymbol{X}\boldsymbol{\beta}}_{\mathbb{R}^n}.$$ Therefore, level sets of $W_2(\boldsymbol{\beta})$ are $(n-1)$-dimensional ellipsoids in $\mathbb{R}^{n}$. Explicit expression for the surface area of such level sets are thus available \citep{rivin2007surface}. By using the spherical parameterization of the ellipsoids, the prior density is obtained as 
\begin{equation}\label{eq:WCP2_beta}
    \pi_{\boldsymbol{\beta}}(\boldsymbol{\beta}) = \frac{\eta\exp\left(-\eta \|\boldsymbol{X}\boldsymbol{\beta}\|_{\mathbb{R}^n}\right)}{\hbox{Area}_{n-1}(S_{W_2(\boldsymbol{\beta}), \boldsymbol{\beta}})\sqrt{D(\|\boldsymbol{X}\boldsymbol{\beta}\|_{\mathbb{R}^n}, \boldsymbol{\beta})}}.
\end{equation}
by a direct application of Definition~\ref{multi_pcwprior_defi}. The derivation and the expressions of the area and the function $D(\cdot, \cdot)$ can be found in Appendix~\ref{app:wcp4regression}. 

\begin{figure}[t]
    \centering  
    \includegraphics[width = 1\linewidth]{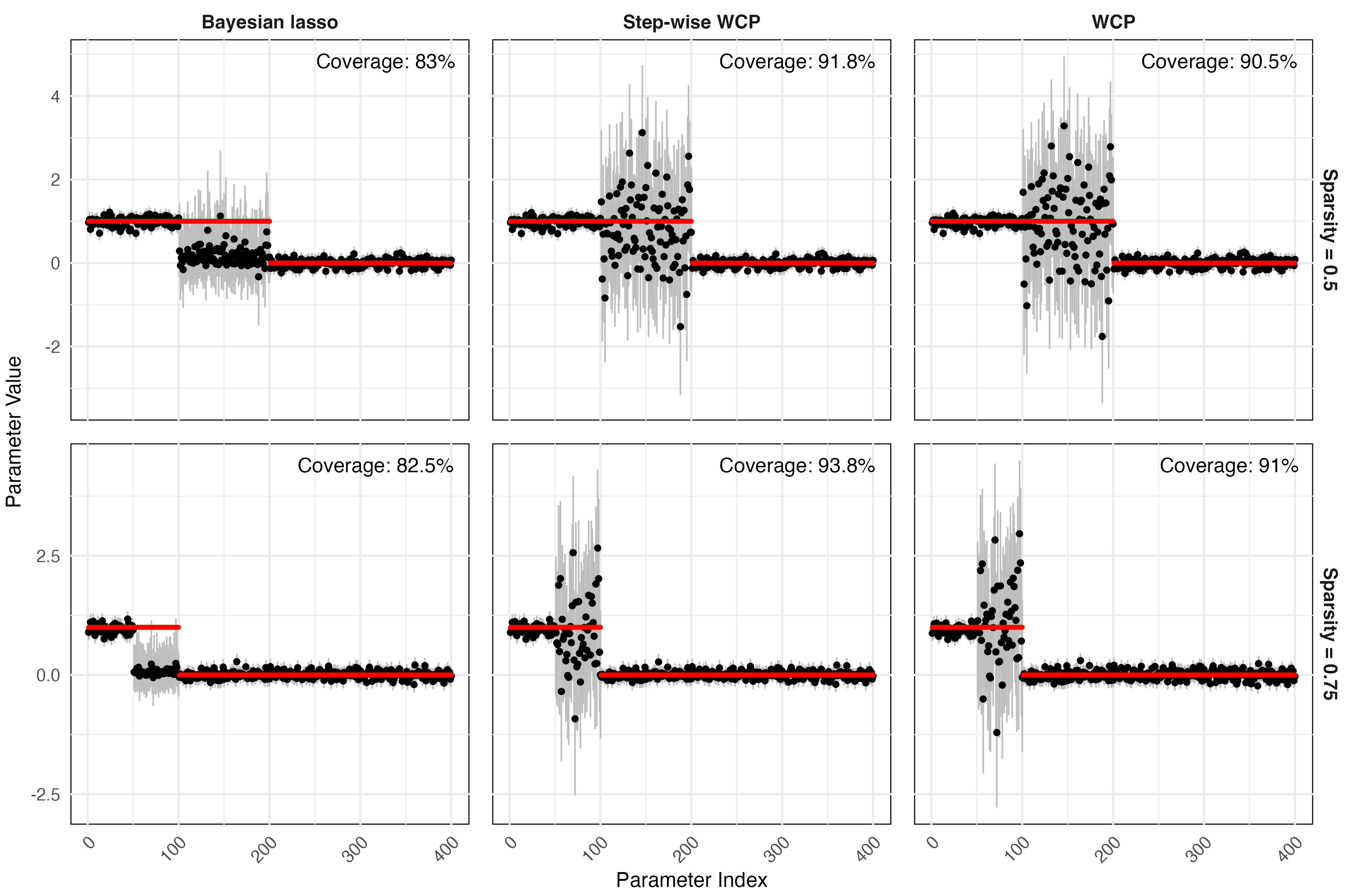}
    \caption{90\% credible bands are shown based on the posterior samples for $\boldsymbol{\beta}$ (400 parameters) under different sparsity levels with the Bayesian lasso prior, step-wise WCP$_2$ prior and WCP$_2$ prior. The X-axis shows the index $i$ of $\beta_i$ , posterior means are show as black dots and the true values of are represented as red line segments. }    
    \label{fig:comparison_200}   
\end{figure}
To compare the two WCP$_2$ priors with the Bayesian lasso prior, we consider a simulation study with two scenarios with overfitted models under different sparsity levels. In both scenarios we have $n=N=400$ and \(\boldsymbol{X}\) is a diagonal matrix with entries \(X_{i,i} = 1\) for \(i = 1,\ldots, n_1\), \(X_{i,i} = 0.1\) for \(i = n_1+1,\ldots, n_2\), \(X_{i,i} = 1\) for \(i > n_2\) and we let \(\beta_i = 1\) for \(i = 1,\ldots, n_1+n_2\) and \(\beta_i = 0\) for \(i > n_1+n_2\) and $\sigma = 0.1$. In the first scenario we have $n_1 = n_2 = 50$ and in the second $n_1 = n_2 = 100$.
The models are fitted via MCMC by \texttt{Stan} \citep{rstan}, $\sigma$ is kept fixed at $0.1$, and $\eta$ is estimated assuming an improper uniform prior on $\mathbb{R}^{+}$.

Figure~\ref{fig:comparison_200} shows that the Bayesian lasso results a highly biased posterior for $\beta_i$ with covariates being 0.1.
This effect is related to the sparsity of $\boldsymbol{\beta}$. 
The WCP$_2$ priors demonstrate superior performance compared to the Bayesian lasso prior in both cases, primarily due to their improved coverage of credible bands. This enhanced coverage ensures more reliable uncertainty quantification, making the WCP$_2$ priors better suited for capturing the true parameter values within the credible intervals. The reason is that the Bayesian lasso prior ignores the scales of the covariates, over-penalizing the coefficients of the covariates with smaller scales.

\section{Discussion}
\label{discussion}
We introduced WCP priors as a principled alternative to PC priors, replacing the Kullback--Leibler divergence with the Wasserstein distance and generalizing the concept of a ``base model'' to that of a base measure. These modifications ensure that WCP priors adhere to their principles without requiring alternative principles, such as Principle 5 in the PC prior framework. By using the Wasserstein distance, WCP priors avoid issues with infinite divergences, provide a genuine metric interpretation of complexity, and retain invariance under smooth reparameterizations.

We demonstrated that WCP priors are mathematically tractable and practical to use. Appendix~\ref{app:wasserstein} provides numerous examples showing that the Wasserstein distance can often be computed analytically or via numerical integration, even in cases with multiple parameters or when the base model does not induce a Dirac measure. Thus, the evaluation of the Wasserstein distance is typically not an obstacle. Additionally, we provided recipes for computing multivariate WCP priors analytically. These methods are implemented in the \texttt{R} package \texttt{WCPprior}, which includes interfaces for \texttt{stan} and \texttt{R-INLA}, facilitating practical applications. The package also contains implementations of numerical methods for approximating WCP priors when analytical solutions are infeasible.

We considered a range of applications to illustrate the flexibility and utility of WCP priors. These include location-scale families, AR(1) processes, tail index of generalized Pareto, joint Gaussian mean and standard deviation priors, and a reinterpretation of the Bayesian lasso from the step-wise WCP$_2$ viewpoint. These examples demonstrate that WCP priors are applicable in all cases where PC priors have been used, while also providing new insights and interpretations.

A promising direction for future work, particularly relevant to spatial statistics, is the development of WCP priors for random fields, such as Gaussian fields with Matérn covariance functions on bounded subsets of $\mathbb{R}^d$.

\appendix
\section{Details about the Wasserstein distance}\label{app:wasserstein}
    For a fixed $x_0 \in \mathcal{X}$, $P_p(\mathcal{X})$ is defined as
\begin{equation}
	P_p(\mathcal{X}):= \Bigl\{ \mu \in P(\mathcal{X});\quad \int_{\mathcal{X}} d(x_0,x)^p d\mu < +\infty \Bigr\},
	\nonumber
\end{equation}
where $P(\mathcal{X})$ denotes a space of probability measure on $\mathcal{X}$. 
By the triangle inequality, with a fixed $x_0$, if ${\int_{\mathcal{X}} d(x_0,x)^p d\mu < +\infty}$, then $\int_{\mathcal{X}} d(x_1,x)^p d\mu < +\infty$ for any other $x_1 \in \mathcal{X}$. 
Therefore, the definition of $P_p(\mathcal{X})$ does not depend on the choice of $x_0$.
    If $\mu,\nu\in P_p(\mathcal{X})$, one can show that $W_p(\mu,\nu)<\infty$, and $W_p$ is in fact a distance on $P_p(\mathcal{X})$ \citep[][p.94]{optransoldnew} so that $(P_p(\mathcal{X}),W_p)$ is a metric space.

    As expected from its definition, the Wasserstein distance is generally challenging to compute. However, for measures on $\mathbb{R}$, the following result, proven by \citet{irpino2015basic}, provides a useful simplification.    
    
    \begin{remark}\label{remarkwformula}
        Let $\mu$ and $\nu$ be two probability measures  on $\mathbbm{R}$. Then 
        $$
        W_p^p(\mu,\nu) =  \int_0^1 |F_{\mu}^{-1}(t)-F_{\nu}^{-1}(t)|^p dt \quad \text{and} \quad W_1(\mu,\nu) = \int_{\mathbbm{R}}|F_{\mu}(x) - F_{\nu}(x)|dx,
        $$ 
        where $F_{\mu}$ and $F_{\nu}$ are the distribution functions of $\mu$ and $\nu$, with corresponding pseudo-inverse  $F_i^{-1}(t) = \inf\{x:F_i(x)\geq t\}$ for $i = \mu,\nu$.
        If $m_{\mu}$ and $m_{\nu}$ are the means, $\sigma^2_{\mu}$ and $\sigma^2_{\nu}$ the variances, and 
        $$
        \rho_{\mu,\nu} = \sigma_{\mu}^{-1} \sigma_{\nu}^{-1} \int_0^1 (F_{\mu}^{-1}(t) - m_{\mu})(F_{\nu}^{-1}(t) - m_{\nu}) dt
        $$ 
        is the correlation between $\mu$ and $\nu$, then
        $$
        W_2^2(\mu, \nu) = (m_{\mu} - m_{\nu})^2 + (\sigma_{\mu} - \sigma_{\nu})^2 + 2\sigma_{\mu}\sigma_{\nu}(1 - \rho_{\mu,\nu}).
        $$
    \end{remark}
    
    The formulas in Remark~\ref{remarkwformula} can be used to compute $W_p(\mu,\nu)$ analytically in many cases, and if the integrals cannot be evaluated analytically, they can be approximated numerically as we will explore later. It should also be noted that the measures $\mu$ and $\nu$ can depend on several parameters, so the formulas are not restricted to priors of distributions with only one parameter.
    
    The following proposition shows that the Wassertein-$p$ distance is much simpler to compute when one of the measures is a Dirac measure. This is important as those are the only required distances for WCP priors if the base model is a Dirac measure. In the next proposition and in the remainder of the paper, $\norm{\cdot}_{\mathbb{R}^d}$ is the Euclidean norm on $\mathbb{R}^d$.
    
    \begin{proposition}
        \label{loc-scale-prop}
        Let $\boldsymbol{X}:\Omega \rightarrow \mathbb{R}^d$ be a random variable on a probability space $(\Omega, \mathcal{F}, P)$ with finite $p$th moment, $p\geq 1$. Let $\mu_{\boldsymbol{X}}$ be its distribution on $\mathbb{R}^d$, and let $\delta_{\boldsymbol{s}}$ be the Dirac measure supported on $\boldsymbol{s}\in\mathbb{R}^d$.
        Then, for $a \in\mathbb{R}$ and $\boldsymbol{b} \in \mathbb{R}^d$,
        ${W_p(\delta_{\boldsymbol{s}},\mu_{a\boldsymbol{X}+\boldsymbol{b}}) = \left(\mathbb{E} \norm{a\boldsymbol{X} + \boldsymbol{b} - \boldsymbol{s}}_{\mathbb{R}^d}^p \right)^{1/p}}$.
        In particular, if $\boldsymbol{b} = \boldsymbol{s}$, then $W_p(\delta_{\boldsymbol{s}},\mu_{a\boldsymbol{X}+\boldsymbol{s}}) = a\left(\mathbb{E} \norm{\boldsymbol{X}}_{\mathbb{R}^d}^p \right)^{1/p}$.
    \end{proposition}

    \begin{proof}
        The only coupling between $\mu_{aX+b}$ and $\delta_{s}$ is the independent coupling. That is, the only element 
        in $\Pi(\mu_{aX+b}, \delta_{s})$ from  Definition~\ref{wassersteindis_defi} is $\pi(A\times B) = \mu_{aX+b}(A)\delta_s(B)$
        for any $A,B\in \mathcal{B}(\mathbb{R}^d)$, where $\mathcal{B}(\mathbb{R}^d)$ denotes the Borel $\sigma$-algebra on $\mathbb{R}^d$. 
        Therefore, the conclusion follows directly. 
    \end{proof}

    Thus, when the base model with corresponding probability distribution $\mu_b$ is a Dirac measure, $W_p(\mu_b, \mu_{\theta})$ can be obtained as a $p$th moment, which often is available in closed form.
    Another important special case is when $\mu$ and $\nu$ are two Gaussian measures on $\mathbbm{R}^d$ with mean vectors $\boldsymbol{m}_\mu, \boldsymbol{m}_\nu$ and non-singular covariance matrices $\boldsymbol{\Sigma}_\mu, \boldsymbol{\Sigma}_\nu$, respectively. 
    Then, by \citet[][Proposition 7]{WDfiniteD},
    \begin{equation}
        W_2^2(\mu,\nu) = 
        \norm{\boldsymbol{m}_\mu-\boldsymbol{m}_\nu}_{\mathbbm{R}^d} + \text{tr}(\boldsymbol{\Sigma}_\mu) + \text{tr}(\boldsymbol{\Sigma}_\nu) - 2\text{tr}( (\boldsymbol{\Sigma}_\mu^{1/2}\boldsymbol{\Sigma}_{\nu}\boldsymbol{\Sigma}_{\mu}^{1/2})^{1/2} ),
        \label{w2dist2gaussianfd}
    \end{equation}
    where $\text{tr}(\cdot)$ is the trace of a matrix.

\section{Base models as base measures}\label{app:basemeasure}
In this section, we provide a detailed description of the assumptions regarding the base model measure. Let $\boldsymbol{\Theta} \subset \mathbb{R}^d$, with $q \in \mathbb{N}$, be a parameter set. Fix some $p \geq 1$ and assume that the model set $M = \{\mu_{\boldsymbol{\theta}}: \boldsymbol{\theta} \in \boldsymbol{\Theta}\}$ consists of probability measures defined on a metric space $(\mathcal{X}, d)$, and that $M \subset P_p(\mathcal{X})$. 

The extended model set $\overline{M}$ is defined as the closure of $M$ in $P_p(\mathcal{X})$, meaning that $\mu \in \overline{M}$ if and only if there exists a sequence $(\boldsymbol{\theta}_n) \subset \boldsymbol{\Theta}$ such that $W_p(\mu_{\boldsymbol{\theta}_n}, \mu) \to 0$ as $n \to \infty$. 

\begin{remark}
A sequence of probability measures $\{\mu_k\}_k\in P_p(\mathcal{X})$ converges to 
$\mu \in P_p(\mathcal{X})$ with distance $W_p$ if and only if $\{\mu_k\}_k$ converges weakly to $\mu$ in $P_p(\mathcal{X})$. In particular,
convergence with respect to the Wasserstein distance $W_p$ is equivalent to convergence of the $p$th moment and weak convergence in $P(\mathcal{X})$ (i.e., the usual weak convergence of measures, that induces convergence in distribution of random variables) \citet[Definition~6.8]{optransoldnew}. 
\end{remark}

We now examine the relationship between $\overline{M}$ and $\boldsymbol{\Theta}$. Suppose that $\boldsymbol{\Theta} = \Theta_1 \times \Theta_2 \times \cdots \times \Theta_d$, where $\Theta_i \subset \mathbb{R}$ for $i = 1, \ldots, d$. Let $\mu_b \in \overline{M}$ be the base measure, and let $(\boldsymbol{\theta}_n)$ be a sequence such that $W_p(\mu_{\boldsymbol{\theta}_n}, \mu_b) \to 0$ as $n \to \infty$. We can write $\boldsymbol{\theta}_n = (\theta_{n,1}, \ldots, \theta_{n,d})$, where each $(\theta_{n,i})_n$ is a sequence in $\mathbb{R}$, $i=1, \ldots, d$. 
We select a monotonic subsequence $(\theta_{n_k^1,1})$ such that $\theta_1$ exists in the extended real line $\overline{\mathbb{R}}:=[-\infty, \infty]$ with $\theta_{n_k^1,1} \to \theta_1$. Proceeding inductively, let $(n_k^2)$ be a subsequence of $(n_k^1)$ such that $(\theta_{n_k^2,2})$ is monotonic, ensuring the existence of $\theta_2$ in the extended real line such that $\theta_{n_k^2,2} \to \theta_2$. Continuing this process, we obtain a sequence $(n_k^d)$ such that there exists $\boldsymbol{\theta}_b = (\theta_{b,1}, \ldots, \theta_{b,d}) \in [-\infty, \infty]^d$, and for each $i = 1, \ldots, d$, we have $\theta_{n_k^d,i} \to \theta_{b,i}$. Therefore, $\boldsymbol{\theta}_b \in \overline{\boldsymbol{\Theta}} := \overline{\Theta}_1 \times \cdots \times \overline{\Theta}_d$, where the closures $\overline{\Theta}_i$, for $i = 1, \ldots, d$, are taken in the extended real line. Thus, we can define $\mu_{\boldsymbol{\theta}_b} := \mu_b$. 

Now, we define the base parameter set as 
$$\boldsymbol{\Theta}_b := \{\boldsymbol{\theta}\in \overline{\boldsymbol{\Theta}}: \exists (\boldsymbol{\theta}_n)\subset \boldsymbol{\Theta}, \boldsymbol{\theta}_n \to \boldsymbol{\theta}\hbox{ and } W_p(\mu_{\boldsymbol{\theta}_n},\mu_b)\to 0\}.$$

Finally, observe that if the map $\boldsymbol{\theta} \mapsto W_p(\mu_{\boldsymbol{\theta}}, \mu_b)$ is uniformly continuous, then this definition of $\mu_{\boldsymbol{\theta}_b} := \mu_b$ is unambiguous. Indeed, for convenience, let $W_p:\boldsymbol{\Theta}\to\mathbb{R}$ represent the map ${W_p(\boldsymbol{\theta}) = W_p(\mu_{\boldsymbol{\theta}}, \mu_b)}$, and observe that if $W_p$ is uniformly continuous, there exists a unique extension $\widetilde{W}_p:\overline{\boldsymbol{\Theta}} \to \overline{\mathbb{R}}$, thus for any sequence $\boldsymbol{\theta}_n\to\boldsymbol{\theta}_b$, we have
$$W_p(\boldsymbol{\theta}_n) = \widetilde{W}_p(\boldsymbol{\theta}_n) \to \widetilde{W}_p(\boldsymbol{\theta}_b) = 0.$$

\begin{remark}
    The assumption \(\boldsymbol{\Theta} = {\Theta}_1 \times \cdots \times {\Theta}_d\) is not necessary. We assume this to prove the result in general; however, there may exist examples where \(\boldsymbol{\Theta} \subset \mathbb{R}^d\) does not take this form, yet it is still possible to find a sequence \((\boldsymbol{\theta}_n) \subset \boldsymbol{\Theta}\) and a corresponding \(\boldsymbol{\theta}_b \in \overline{\boldsymbol{\Theta}}\) such that \(W_p(\mu_{\boldsymbol{\theta}_n}, \mu_b) \to 0\) as \(n \to \infty\).
\end{remark}


\section{Technical details on WCP priors}\label{app:assumptions}
    For the univariate WCP priors, we make the following assumptions.
    \begin{assumption}\label{assump1}
        The family $(\mu_\theta)_{\theta \in \Theta}$ satisfies:
        \begin{enumerate}	
            \item Both $W_p^{-}$ and $W_p^{+}$ are injective and differentiable on their 
            domains. \label{assump1:item1}
            \item If $\Theta_-\neq\emptyset$, then we require ${\lim_{\theta \rightarrow \theta_b}W_p^{-}(\theta) = 0}$ and ${\lim_{\theta \rightarrow \theta_-}W_p^{-}(\theta) = c_-}$. 
            If $\Theta_+\neq\emptyset$, the we require ${\lim_{\theta \rightarrow \theta_b}W_p^{+}(\theta) = 0}$ and ${\lim_{\theta \rightarrow \theta_+}W_p^{+}(\theta) = c_+}$. \label{assump1:item2}
        \end{enumerate}
    \end{assumption} 

Assumption \ref{assump1}:\ref{assump1:item1} is a mild condition that allows us to perform the change of variables, whereas Assumption \ref{assump1}:\ref{assump1:item2} is a condition that allows us to ensure that the WCP prior will contract towards the base model.

In the multivariate case, we need the following weak regularity conditions on the model to guarantee that the prior is well-defined:
\begin{assumption}   
	\label{multi_pwc_assumptions}
	The family \((\mu_{\boldsymbol{\theta}})_{\boldsymbol{\theta} \in \boldsymbol{\Theta}}\) satisfies:   
	\begin{enumerate} 
		\item \(\boldsymbol{\theta} \mapsto W_p(\boldsymbol{\theta})\) is of class \(C^1\) on \(\boldsymbol{\Theta}\) and continuous on \(\overline{\boldsymbol{\Theta}}\). \label{multi_pwc_assumption:item2} 
		\item \(W_p(\boldsymbol{\theta})\) has a nonvanishing gradient for \(\boldsymbol{\theta} \neq \boldsymbol{\theta}_b\).\label{multi_pwc_assumption:item3} 
		\item For any \(w \in \mathbb{R}\) such that the level set \(S_{w,\boldsymbol{\theta}} = \{\boldsymbol{\theta}\in \overline{\boldsymbol{\Theta}} \mid W_p(\boldsymbol{\theta}) = w\}\) is nonempty, \(S_{w,\boldsymbol{\theta}}\) is a compact set.\label{multi_pwc_assumption:item4} 
	\end{enumerate} 
\end{assumption} 

Assumptions \ref{multi_pwc_assumptions}:\eqref{multi_pwc_assumption:item2} and \ref{multi_pwc_assumptions}:\eqref{multi_pwc_assumption:item3} imply that \(W_p(\cdot)\) is a submersion, which is both realistic and natural, as it suggests the absence of critical points for the Wasserstein distance among the flexible models—an expectation for any meaningful parameterization.  Additionally, Assumption~\ref{multi_pwc_assumptions}:\eqref{multi_pwc_assumption:item4} plays a crucial role in guaranteeing the existence of a uniform distribution over the level sets, as there are no uniform distributions over unbounded sets. This property will be essential for the following construction of  multivariate WCP priors.
\begin{remark}
	Observe that \(W_p(\mu_{\boldsymbol{\theta}_b}, \mu_{\boldsymbol{\theta}}) \rightarrow 0\) as \(\|\boldsymbol{\theta}_b - \boldsymbol{\theta}\|_{\mathbb{R}^d} \rightarrow 0\).
	Further, for a constant \(w > 0\), let \(S_{w,\boldsymbol{\theta}} = \{\boldsymbol{\theta} \in \overline{\boldsymbol{\Theta}} \mid W_p(\boldsymbol{\theta}) = w\}\) be the level set of \(W_p(\boldsymbol{\theta})\) corresponding to \(w\).  Assumption~\ref{multi_pwc_assumptions}:\eqref{multi_pwc_assumption:item2}-\eqref{multi_pwc_assumption:item3} and the inverse function theorem imply that \(S_{w,\boldsymbol{\theta}}\) is a \(C^1\)-hypersurface in \(\mathbb{R}^d\). 
	This also implies that 
	\(\boldsymbol{\Theta}\) is a foliation formed by the level sets \(S_{w,\boldsymbol{\theta}}\). 
\end{remark}

We will now proceed to give a general definition of the multivariate WCP priors, but first we need to introduce uniform distributions over the level sets. To this end, we will provide a brief introduction to integration on hypersurfaces, which will a key concept for such a  definition. Let $M\subset \mathbb{R}^d$ be a compact differentiable hypersurface given by a level set\footnote{In fact, every compact smooth hypersurface is a level set, which is a result that is typically used to prove the celebrated Jordan-Brouwer separation theorem.}. Further, given parameterization $X: U\subset\mathbb{R}^{d-1}\to O_M\subset M$, where $O_M$ is an open set (in the induced topology on $M$ from $\mathbb{R}^d$) and $X(\boldsymbol{u}) = (X_1(\boldsymbol{u}), \ldots, X_d(\boldsymbol{u})),$ for $\boldsymbol{u}\in U$. The area element\footnote{The area element is also commonly referred to as the volume element on $M$ or the volume form on $M$, but we choose the nomenclature area element to avoid possible confusion as $M$ is embedded in $\mathbb{R}^d$, and volume element might create a misunderstanding.} on $M$ around $O_M$ is given by
$$d_{d-1}V = \sqrt{\det \boldsymbol{G}} \, du_1 \wedge du_2 \wedge \cdots \wedge du_{d-1},$$
where $\boldsymbol{G} = \boldsymbol{G}(\boldsymbol{u}) = (g_{ij}(u))_{i,j=1}^{d-1}$ is the induced metric tensor on $M$ (the pullback from the Euclidean metric on $\mathbb{R}^d$) whose entries for $i,j=1,\ldots,d-1$ are
\begin{equation}\label{eq:metric_tensor}
    g_{ij}(\boldsymbol{u}) = \left\< \frac{\partial X}{\partial u_i}(\boldsymbol{u}), \frac{\partial X}{\partial u_j}(\boldsymbol{u})\right\> = \sum_{k=1}^d \frac{\partial X_k}{\partial u_i}(\boldsymbol{u}) \frac{\partial X_k}{\partial u_j}(\boldsymbol{u}).
\end{equation}
Now, observe that since $M$ is compact, it can be covered by a finite collection of parameterizations. That is, there exist $N\in\mathbb{N}$ such that $X_1,\ldots, X_N$ are paremeterizations on $M$, with $X_j: U_j \subset \mathbb{R}^{d-1} \to O_{M,j} \subset M$ and ${M = \bigcup_{j=1}^M O_{M,j}}$. In order to integrate with respect to the area element $d_{d-1}V$ on $M$, we must introduce the notion of partition of unity subordinated to the open cover $\{O_{M,j} \mid j=1,\ldots,N\}$:
\begin{definition}[Partition of Unity]
    Let $\{O_{M,j}\}_{j=1}^N$ be an open cover of $M$. A partition of unity subordinated to this cover is a collection of smooth functions $\{\psi_j\}_{j=1}^N$ such that:
    \begin{enumerate}
        \item $\psi_j: M \to [0,1]$ for all $j = 1, \ldots, N$;
        \item $\text{supp}(\psi_j) \subset O_{M,j}$ for all $j = 1, \ldots, N$, where $\hbox{supp}(\cdot)$ stands for the support and is defined as $\hbox{supp}(\psi_j) := \overline{\{\boldsymbol{\theta}\mid \psi_j(\boldsymbol{\theta})\neq 0\}}$;
        \item $\sum_{j=1}^N \psi_j(x) = 1$ for all $x \in M$.
    \end{enumerate}
\end{definition}

The existence of partitions of unity subordinated to an open cover is proved, e.g., in \citet{spivak1965calculus}. Now, given a partition of unity $\{\psi_j\}_{j=1}^N$ subordinated to the cover $\{O_{M,j}\}_{j=1}^N$, we can define the integral of a function $f: M \to \mathbb{R}$ with respect to the area element $d_{d-1}V$ as
\begin{equation}\label{eq:integral_manifold}
    \int_M f(x) \, d_{d-1}V = \sum_{j=1}^N \int_{U_j} f(X_j(\boldsymbol{u})) \psi_j(X_j(\boldsymbol{u})) \sqrt{\det \boldsymbol{G}_j(\boldsymbol{u})} \, du_1 \ldots du_{d-1},
\end{equation}
where $\boldsymbol{G}_j$ is the induced metric tensor corresponding to the parameterization $X_j$ and whose elements can be computed as in \eqref{eq:metric_tensor}. Now, \eqref{eq:integral_manifold} allows us to define the surface area of $M$:
\begin{equation}\label{eq:surface_area}
    \text{Area}_{d-1}(M) = \int_M d_{d-1} V,
\end{equation}
and to define the uniform distribution over $M$, which is given by the measure
$$P_M(S) = \frac{1}{\hbox{Area}_{d-1}(M)} \int_{S} d_{d-1} V.$$
We are now in a position to define the general multivariate WCP prior, that is, WCP prior for the case in which the level sets needs to be covered by more than one parameterization. Thus, assume that for every $w>0$ such that $S_{w,\boldsymbol{\theta}}\neq \emptyset$, we have that $S_{w,\boldsymbol{\theta}}$ is a compact and differentiable hypersurface. Then, the multivariate WCP prior is the distribution that assigns a (possibly truncated) exponential distribution for $w$ and, given $w$, a uniform distribution for $\boldsymbol{\theta}$ on $S_{w,\boldsymbol{\theta}}$. Therefore, the WCP prior is the probability distribution on $M_{WCP} := \bigcup_{w > 0} \{w\} \times S_{w,\boldsymbol{\theta}}$ given by
$$P( (w, \boldsymbol{\theta}) \in A) = \int_{D(A)} \frac{\eta \exp(-\eta w)}{1-\exp(-\eta c)} \left(\frac{1}{\hbox{Area}_{d-1}(S_{w,\boldsymbol{\theta}})}\int_{A_w} d_{d-1}V_w\right)\, dw,$$
where $c := \sup_{\boldsymbol{\theta}} W_p(\boldsymbol{\theta})$, which can be infinite, $d_{d-1}V_w$ is the area element in $S_{w,\boldsymbol{\theta}}$, $M_{WCP}$ is endowed the relative topology induced by $(0,\infty)\times \mathbb{R}^d$, $A$ is a Borel set in $M_{WCP} \subset (0,\infty)\times \mathbb{R}^d$, $A_w = \{\boldsymbol{\theta} \in\mathbb{R}^d \mid (w, \boldsymbol{\theta}) \in A\}$ and 
$$D(A) = \{w >0 \mid \exists \boldsymbol{\theta} \in S_{w,\boldsymbol{\theta}} \hbox{ such that } (w, \boldsymbol{\theta}) \in A\}.$$

Now, for each $w>0$, $S_{w,\boldsymbol{\theta}}$ is a compact hypersurface and, therefore, can be covered by finitely many parameterizations. Thus, let $S_{w, \boldsymbol{\theta}} = \bigcup_{k=1}^{N_w} O_{k,w}$ be an open cover induced by such parameterizations and let $\{\psi_{k,w}\}_{k=1}^{N_w}$ be an associated partition of unity subordinated to $\{O_{k,w}\}_{k=1}^{N_w}$. By letting $\psi_{k,w} \equiv 0$ and $O_{k,w} = \emptyset$ for $k>N_w$, we can write for all $w>0$, $S_{w,\boldsymbol{\theta}} = \bigcup_{k=1}^\infty O_{k,w}$ with associated partition of unity $\{\psi_{k,w}\}_{k\in\mathbb{N}}$. Further, let for each $w>0$ and each $k\in\mathbb{N}$, $X_{k,w} : U_{k,w} \subset\mathbb{R}^{d-1}\to O_{k,w}\subset S_{w,\boldsymbol{\theta}}$ be the corresponding parameterization. Since all entries are positive, we can use \eqref{eq:integral_manifold} and Fubini-Tonelli's theorem (to interchange between the infinite sum and integral and also to turn the iterated integrals into a multiple integral) to obtain that
\begin{align*}
    &\int_{D(A)} \frac{\eta \exp(-\eta w)}{1-\exp(-\eta c)} \left(\frac{1}{\hbox{Area}_{d-1}(S_{w,\boldsymbol{\theta}})}\int_{A_w} d_{d-1}V_w\right)\, dw =\\
    &\int_{A}\sum_{k=1}^\infty \mathbbm{1}_{w >0, \boldsymbol{u} \in O_{k,w}} \frac{\eta \exp(-\eta w)}{1-\exp(-\eta c)} \frac{\psi_{k,w}(\boldsymbol{u}) \sqrt{\boldsymbol{G}(\boldsymbol{u})}}{\hbox{Area}_{d-1}(S_{w,\boldsymbol{\theta}})} \, dw d\boldsymbol{u},
\end{align*}
where $\boldsymbol{u} = (u_1,\ldots,u_{d-1})$ and $\boldsymbol{G}(\boldsymbol{u})$ is given by \eqref{eq:metric_tensor}.

Finally, to obtain the multivariate WCP density, for each $k\in\mathbb{N}$, we do the change of variables induced by the map $\boldsymbol{\Phi}_k(\boldsymbol{\theta}) = (W_p(\boldsymbol{\theta}), X_{k,W_p(\boldsymbol{\theta})}^{-1}(\boldsymbol{\theta}))$ to arrive at the multivariate WCP density for $\boldsymbol{\theta}$ which is formalized in the following definition, where $J_{\boldsymbol{g}}(\boldsymbol{x})$ denotes the jacobian matrix of a differentiable function $\boldsymbol{g}$ evaluated at $\boldsymbol{x}$.

\begin{definition}\label{def:gen_WCP_density}
    Let Assumption \ref{multi_pwc_assumptions} hold. For each $w>0$ let $X_{k,w}:U_k\subset\mathbb{R}^{d-1}\to O_{k,w}\subset S_{w,\boldsymbol{\theta}}$, $k=1,\ldots, N_w$ be a collection of parameterizations of $S_{w,\boldsymbol{\theta}}$ that provides an open cover for it and $\{\psi_{k,w}\}_{k=1}^{N_w}$ be an associated partition of unity. Let $O_{k,w} = \emptyset$ and $\psi_{k,w} \equiv 0$ for $k>N_w$. Further, assume that for each $k = 1,\ldots, N_w$, the map $(w, \boldsymbol{u}) \mapsto X_{k,w}(\boldsymbol{u})$ is a local diffeomorphism and let $\boldsymbol{\Phi}_k(\boldsymbol{\theta}) = (W_p(\boldsymbol{\theta}), X_{k,W_p(\boldsymbol{\theta})}^{-1}(\boldsymbol{\theta}))$ if $O_{k,W_p(\boldsymbol{\theta})}\neq \emptyset$ and $\boldsymbol{\Phi}_k(\boldsymbol{\theta}) = 0$ if $O_{k,W_p(\boldsymbol{\theta})} = \emptyset$. The multivariate WCP density for $\boldsymbol{\theta}$ is given by
    $$
    \pi(\boldsymbol{\theta}) = \sum_{k=1}^\infty |\det J_{\boldsymbol{\Phi}_k}(\boldsymbol{\theta})| \frac{\eta\exp(-\eta W_p(\boldsymbol{\theta}))}{1-\exp(-\eta c)}\frac{\psi_{k,W(\boldsymbol{\theta})}(\boldsymbol{\theta})\sqrt{\det \boldsymbol{G}_{k}(\boldsymbol{\theta})}}{\text{Area}_{d-1}(S_{W_p(\boldsymbol{\theta}),\boldsymbol{\theta}})},
    $$
    where 
    $$
    {\boldsymbol{G}_k(\boldsymbol{\theta}) = J_{X_{k,W_p(\boldsymbol{\theta})}}(X_{k,W_p(\boldsymbol{\theta})}^{-1}(\boldsymbol{\theta}))^\top J_{X_{k,W_p(\boldsymbol{\theta})}}(X_{k,W_p(\boldsymbol{\theta})}^{-1}(\boldsymbol{\theta}))}
    $$ 
    and $J_{X_{k,W_p(\boldsymbol{\theta})}}(X_{k,W_p(\boldsymbol{\theta})}^{-1}(\boldsymbol{\theta}))$ is the $d\times (d-1)$ matrix $J_{X_{k,w}}(\boldsymbol{u})$ evaluated at $w=W_p(\boldsymbol{\theta})$ and $\boldsymbol{u} = X_{k,W_p(\boldsymbol{\theta})}^{-1}(\boldsymbol{\theta})$, $\eta > 0$ is a hyperparameter and $c := \sup_{\boldsymbol{\theta}} W_p(\boldsymbol{\theta})$, which can be infinite.
\end{definition}

\begin{remark}
    Definition~\ref{multi_pcwprior_defi} considers the case in which the level sets can be covered by a single parameterization. In this case there is no need for partitions of unity.
\end{remark}

Sometimes the Jacobian matrix of $\boldsymbol{\Phi}_k$ might not be convenient to compute. We will now provide some results that can be used to compute the multivariate WCP density without the need to compute the Jacobian matrix of $\boldsymbol{\Phi}_k$. First, we will connect $\det \boldsymbol{G}_k(\boldsymbol{\theta})$ with $\det J_{\boldsymbol{\Phi}_k}(\boldsymbol{\theta})$.

\begin{lemma}\label{lem:det_G_JPsi}
    Let the assumptions of Definition \ref{def:gen_WCP_density} hold.
    Then, the following identity holds for every $k\in\mathbb{N}$:
    \begin{align*}
        \sqrt{\det \boldsymbol{G}_k(\boldsymbol{\theta})} &= \frac{1}{|\det J_{\boldsymbol{\Phi}_k}(\boldsymbol{\theta})|\sqrt{D_k(W_2(\boldsymbol{\theta}), X_{k,W_2(\boldsymbol{\theta})}^{-1}(\boldsymbol{\theta}))}},
    \end{align*}
    where
    \begin{align}
        D_k(w, \boldsymbol{u}) &= \frac{\partial X_{k,w}(\boldsymbol{u})}{\partial w}^\top \boldsymbol{P}_{k,w}(\boldsymbol{u}) \frac{\partial X_{k,w}(\boldsymbol{u})}{\partial w}, \label{eq:expr_D}\\
    \boldsymbol{P}_{k,w}(\boldsymbol{u}) &= \boldsymbol{I}_{d-1} - J_{X_{k,w}}(\boldsymbol{u})(J_{X_{k,w}}(\boldsymbol{u})^\top J_{X_{k,w}}(\boldsymbol{u}))^{-1}\!J_{X_{k,w}}(\boldsymbol{u})^\top, \label{eq:P_matrix_D_matrix}
    \end{align}
    and $D_k(W_2(\boldsymbol{\theta}), X_{k,W_2(\boldsymbol{\theta})}^{-1}(\boldsymbol{\theta}))$ is the value of $D_k(w, \boldsymbol{u})$ when $w = W_2(\boldsymbol{\theta})$ and $\boldsymbol{u} = X_{k,W_2(\boldsymbol{\theta})}^{-1}(\boldsymbol{\theta})$.
\end{lemma}

\begin{proof}
We start by defining the map $\boldsymbol{\Psi}_k: (w,\boldsymbol{u}) \mapsto X_{k,w}(\boldsymbol{u})$. Further, as noted in Remark~\ref{remark:diffeomorphism}, $\boldsymbol{\Phi}_k$ is the inverse of $\boldsymbol{\Psi}_k$. Therefore, for every $\boldsymbol{\theta}$, $\boldsymbol{\Psi}_k(\boldsymbol{\Phi}_k(\boldsymbol{\theta})) = \boldsymbol{\theta}$. Hence, by the chain rule, we have that
$$
\boldsymbol{I}_n = \frac{\partial \boldsymbol{\Psi}_k(\boldsymbol{\Phi}_k(\boldsymbol{\theta}))}{\partial \boldsymbol{\theta}} = \frac{\partial \boldsymbol{\Psi}_k(w,\boldsymbol{\varphi})}{\partial (w,\boldsymbol{\varphi})} \cdot \frac{\partial \boldsymbol{\Phi}_k(\boldsymbol{\theta})}{\partial \boldsymbol{\theta}} = J_{\boldsymbol{\Psi}_k}(\boldsymbol{\Phi}_k(\boldsymbol{\theta})) \cdot J_{\boldsymbol{\Phi}_k}(\boldsymbol{\theta}).
$$
Thus,
\begin{equation}\label{eq:JPsi}
    J_{\boldsymbol{\Psi}_k}(\boldsymbol{\Phi}_k(\boldsymbol{\theta})) = (J_{\boldsymbol{\Phi}_k}(\boldsymbol{\theta}))^{-1}.
\end{equation}
Further, note that $\frac{\partial X_{k,w}(\boldsymbol{u})}{\partial \boldsymbol{u}} = J_{X_{k,w}}(\boldsymbol{u})$, so that the Jacobian matrix of $\boldsymbol{\Psi}_k$ is
$$
J_{\boldsymbol{\Psi}_k}(w,\boldsymbol{u}) = \begin{pmatrix}
\frac{\partial X_{k,w}(\boldsymbol{u})}{\partial w} & \frac{\partial X_{k,w}(\boldsymbol{u})}{\partial \boldsymbol{u}}
\end{pmatrix}
= 
\begin{pmatrix}
    \frac{\partial X_{k,w}(\boldsymbol{u})}{\partial w} & J_{X_{k,w}}(\boldsymbol{u})
\end{pmatrix}.
$$
Therefore, we have that 
\begin{align*}
J_{\boldsymbol{\Psi}}(w,\boldsymbol{u})^\top J_{\boldsymbol{\Psi}}(w,\boldsymbol{u}) &= \begin{pmatrix}
    \frac{\partial X_{k,w}(\boldsymbol{u})}{\partial w}^\top & J_{X_{k,w}}(\boldsymbol{u})^\top
\end{pmatrix} \begin{pmatrix}
    \frac{\partial X_{k,w}(\boldsymbol{u})}{\partial w} \\
    J_{X_{k,w}}(\boldsymbol{u})
\end{pmatrix} \\
&= \begin{pmatrix}
    \frac{\partial X_{k,w}(\boldsymbol{u})}{\partial w}^\top \frac{\partial X_{k,w}(\boldsymbol{u})}{\partial w} & \frac{\partial X_{k,w}(\boldsymbol{u})}{\partial w}^\top J_{X_{k,w}}(\boldsymbol{u}) \\
    J_{X_{k,w}}(\boldsymbol{u})^\top \frac{\partial X_{k,w}(\boldsymbol{u})}{\partial w} & J_{X_{k,w}}(\boldsymbol{u})^\top J_{X_{k,w}}(\boldsymbol{u})
\end{pmatrix}.
\end{align*}
Further, by using determinant properties for block matrices, we have that
\begin{align*}
\det J_{\boldsymbol{\Psi}}(w,\boldsymbol{u})^\top J_{\boldsymbol{\Psi}}(w,\boldsymbol{u}) &= \det(J_{X_{k,w}}(\boldsymbol{u})^\top J_{X_{k,w}}(\boldsymbol{u})) D(w,\boldsymbol{u})
\end{align*}
where $D_k(w,\boldsymbol{u})$ is given by \eqref{eq:expr_D}. Therefore,
$$
D_k(w,\boldsymbol{u}) \det J_{X_{k,w}}(\boldsymbol{u})^\top J_{X_{k,w}}(\boldsymbol{u}) = \det J_{\boldsymbol{\Psi}}(w,\boldsymbol{u}) = \left(\det J_{\boldsymbol{\Psi}}(w,\boldsymbol{u})\right)^2.$$
By evaluating the above expression at $w = W_2(\boldsymbol{\theta})$, and $\boldsymbol{\varphi} = X_{W_2(\boldsymbol{\theta})}^{-1}(\boldsymbol{\theta})$ and substituting this into \eqref{eq:JPsi}, we have that
\begin{align*}
    \sqrt{\det \boldsymbol{G}_k(\boldsymbol{\theta})} &= \sqrt{\det J_{X_{k,W_2(\boldsymbol{\theta})}}(X_{k,W_2(\boldsymbol{\theta})}^{-1}(\boldsymbol{\theta}))^\top J_{X_{k,W_2(\boldsymbol{\theta})}}(X_{k,W_2(\boldsymbol{\theta})}^{-1}(\boldsymbol{\theta}))}\\
     &= \frac{1}{|\det J_{\boldsymbol{\Phi}_k}(\boldsymbol{\theta})|\sqrt{D_k(W_2(\boldsymbol{\theta}), X_{k,W_2(\boldsymbol{\theta})}^{-1}(\boldsymbol{\theta}))}}.
\end{align*}
\end{proof}

\begin{remark}
    The matrix $\boldsymbol{P}_{k,w}(u)$ in \eqref{eq:P_matrix_D_matrix} is a projection matrix onto the normal space of the level set $S_{w,\boldsymbol{\theta}}$ at the point $X_{k,w}(u) = \boldsymbol{\theta}$. Furthermore, the quantity $D_k(w,\boldsymbol{u})$ in \eqref{eq:expr_D} can be interpreted as the squared norm of the residuals of a linear regression of $\frac{\partial X_{k,w}(\boldsymbol{u})}{\partial w}$ on $J_{X_{k,w}}(\boldsymbol{u})$, that is, a regression of $\frac{\partial X_{k,w}(\boldsymbol{u})}{\partial w}$ on the tangent space of $S_{w,\boldsymbol{\theta}}$ at the point $X_{k,w}(u) = \boldsymbol{\theta}$. This means that $D_k(w,\boldsymbol{u})$ is a measure of the ``non-tangential'' variation of $X_{k,w}(\boldsymbol{u})$ with respect to $w$.
\end{remark}

The above lemma allows us to provide the following alternative expression for the multivariate WCP prior.

\begin{proposition}\label{prp:multi_param_wcp_D_matrix}
    Let the assumptions of Definition \ref{def:gen_WCP_density} hold. Then, the multivariate WCP prior can be computed as
    $$
    \pi(\boldsymbol{\theta}) = \sum_{k=1}^\infty\frac{\eta\exp(-\eta W_p(\boldsymbol{\theta}))}{1-\exp(-\eta c)}\frac{\psi_{k,W(\boldsymbol{\theta})}(\boldsymbol{\theta})}{\text{Area}_{d-1}(S_{W_p(\boldsymbol{\theta}),\boldsymbol{\theta}}) D_k(W_2(\boldsymbol{\theta}), X_{k,W_2(\boldsymbol{\theta})}^{-1}(\boldsymbol{\theta}))},
    $$
    where $\eta > 0$ is a user-specified hyperparameter and $D_k(w,\boldsymbol{u})$ is given by \eqref{eq:expr_D}, with the evaluation of $D_k(w,\boldsymbol{u})$ at $w = W_2(\boldsymbol{\theta})$ and $\boldsymbol{u} = X_{k,W_2(\boldsymbol{\theta})}^{-1}(\boldsymbol{\theta})$ being $D_k(W_2(\boldsymbol{\theta}), X_{k,W_2(\boldsymbol{\theta})}^{-1}(\boldsymbol{\theta}))$ .
\end{proposition}

Similarly, if we have a single parameterization as in Definition~\ref{multi_pcwprior_defi}, then the multivariate WCP prior can be computed as
\begin{proposition}\label{prp:single_param_wcp_D_matrix}
    Let the assumptions of Definition~\ref{multi_pcwprior_defi} hold. Further, recall that in this case, for each $w>0$ where $S_{w,\boldsymbol{\theta}}\neq \emptyset$, we assume that a parameterization ${X_w: U_w \subset \mathbb{R}^{d-1} \to \widetilde{S}_{w,\boldsymbol{\theta}} \subset S_{w,\boldsymbol{\theta}}}$ exists such that $\text{Area}_{d-1}(S_{w,\boldsymbol{\theta}}\setminus \widetilde{S}_{w,\boldsymbol{\theta}}) = 0$. Then, the multivariate WCP prior can be computed as
    \begin{equation*}
        \label{multi_pcwprior_density_single_param_D_matrix}
        \pi(\boldsymbol{\theta}) = \frac{\eta\exp(-\eta W_p(\boldsymbol{\theta}))}{(1-\exp(-\eta c))\text{Area}_{d-1}(S_{W_p(\boldsymbol{\theta}),\boldsymbol{\theta}}) D(W_2(\boldsymbol{\theta}), X_{k,W_2(\boldsymbol{\theta})}^{-1}(\boldsymbol{\theta}))},
    \end{equation*}
    where $\eta > 0$ is a user-specified hyperparameter and 
    \begin{equation*}\label{eq:D_matrix_single_param}
        D(w,\boldsymbol{u}) = \frac{\partial X_{w}(\boldsymbol{u})}{\partial w}^\top \boldsymbol{P}_{w}(\boldsymbol{u}) \frac{\partial X_{w}(\boldsymbol{u})}{\partial w},
    \end{equation*}
    with
    $$
        \boldsymbol{P}_{w}(\boldsymbol{u}) = \boldsymbol{I}_{d-1} - J_{X_{w}}(\boldsymbol{u})(J_{X_{w}}(\boldsymbol{u})^\top J_{X_{w}}(\boldsymbol{u}))^{-1}\!J_{X_{w}}(\boldsymbol{u})^\top,
    $$
    and $D(W_2(\boldsymbol{\theta}), X_{W_2(\boldsymbol{\theta})}^{-1}(\boldsymbol{\theta}))$ is the value of $D(w,\boldsymbol{u})$ evaluated at $w = W_2(\boldsymbol{\theta})$ and $\boldsymbol{u} = X_{W_2(\boldsymbol{\theta})}^{-1}(\boldsymbol{\theta})$.
\end{proposition}

Observe that it is possible for the same family of measures to have different parameterizations where one parameterization satisfies Assumption \ref{multi_pwc_assumptions} while another does not. For instance, consider the family of univariate Gaussian distributions. Let $\boldsymbol{\theta}_1 = (m, \sigma)$ represent the standard parameterization with mean $m$ and standard deviation $\sigma$, and let $\boldsymbol{\theta}_2 = (m, \tau)$ represent an alternative parameterization with mean $m$ and precision $\tau = 1/\sigma^2$.
Further, let $\mu_b = \delta_0$, which corresponds to $\sigma=m=0$. Using \eqref{w2dist2gaussianfd}, the Wasserstein distance for the parameterization $\boldsymbol{\theta}_1$ is given by
$W_2(\boldsymbol{\theta}_1) = \sqrt{m^2 + \sigma^2}$,
while for $\boldsymbol{\theta}_2$ it is
$W_2(\boldsymbol{\theta}_2) = \sqrt{m^2 + \tau^{-1}}$.
Thus, the level sets \( S_{w,\boldsymbol{\theta}_1} \) for $\boldsymbol{\theta}_1$ are either empty or circles, whereas the level sets \( S_{w,\boldsymbol{\theta}_2} \) for $\boldsymbol{\theta}_2$ are either empty or unbounded. This discrepancy illustrates that Assumption \ref{multi_pwc_assumptions}:\ref{multi_pwc_assumption:item4} may be violated depending on the choice of parameterization. 
In view of the previous discussion, we can extend the definition of WCP$_p$ priors to parameterizations that violate Assumption \ref{multi_pwc_assumptions} in the following manner. 
\begin{definition}\label{def:multi_wcp_ref_par}
    Fix a reference parameterization $\boldsymbol{\theta}$ such that Assumption \ref{multi_pwc_assumptions} holds for such parameterization. This means that for every $w$, $S_{w,\boldsymbol{\theta}}$ is compact. For each $w>0$ let $\{X_{k,w,\boldsymbol{\theta}}\}_{k\in\mathbb{N}}$ be a family of parameterizations given as in Definition \ref{def:gen_WCP_density} with respect to the level sets $S_{w,\boldsymbol{\theta}}$. Let, now, $\boldsymbol{\vartheta}$ be any parameterization such that the transformation \(\boldsymbol{\theta} = g(\boldsymbol{\vartheta})\) is twice differentiable, invertible, and has a twice differentiable inverse. The multivariate WCP$_p$ prior for $\boldsymbol{\vartheta}$ based on the reference parameterization $\boldsymbol{\theta}$ is given by the following change of variables:
    $$\pi_{\boldsymbol{\theta}}(\boldsymbol{\vartheta}) := \pi(g(\boldsymbol{\vartheta})) |\det J_g(\boldsymbol{\vartheta})|,$$
    where $\pi(\cdot)$ is the multivariate WCP prior for $\boldsymbol{\theta}$ given in Definition \ref{def:gen_WCP_density}.
\end{definition}

\section{Recipes for multivariate priors}\label{app:recipe}

In this section we provide recipes for computing the multivariate WCP prior, when explicit expressions for $\hbox{Area}_{d-1}(S_{w,\boldsymbol{\theta}})$ are unknown. The following proposition is an immediate consequence of Definition \ref{def:gen_WCP_density} and of the definition of area-preserving parameterizations. More precisely, in the following recipe, level sets of $W_p(\boldsymbol{\theta})$ must be bounded hypersufaces. Assigning a uniform distribution on such hypersurface is done via an area-preserving parameterization which is a mapping from an Euclidean space to the hypersurface itself. Such parameterization guarantees that a mapped uniformly distributed random vector on that Euclidean space also follows a uniform distribution on the hypersurface.

\begin{proposition}[General recipe for computing multivariate WCP priors]
	\label{prp:recipe_multi_pcwprior}
    Let the conditions in Assumption~\ref{multi_pwc_assumptions} be satisfied. Additionally, for every $w > 0$ where $S_{w,\boldsymbol{\theta}} \neq \emptyset$, suppose there exists an area-preserving parameterization $\gamma_w: U_w \rightarrow S_{w,\boldsymbol{\theta}}$ of $S_{w,\boldsymbol{\theta}}$. Moreover, assume that $U_w = U_{1,w} \times \dots \times U_{d-1,w}$, where $U_{i,w} \subset \mathbb{R}$ are open intervals for $i = 1, \dots, d-1$.
    Let $\{u_{i,w}\}_{i=1}^{d-1}$ represent the parameters of $S_{w,\boldsymbol{\theta}}$ under this parameterization. Then, the multivariate \(\text{WCP}_p\) prior density for \(\boldsymbol{\theta}\) is
    \begin{equation}
        \pi(\boldsymbol{\theta}) =
        \abs{\det J_{W_p, \{u_{i, W_p(\boldsymbol{\theta})}\}_{i = 1}^{d-1}}(\boldsymbol{\theta})} \frac{\eta \exp(-\eta W_p(\boldsymbol{\theta}))}{1 - \exp(-\eta c)} \prod_{i = 1}^{d-1} \frac{\mathbbm{1}_{u_{i, W_p(\boldsymbol{\theta})} \in U_{i, W_p(\boldsymbol{\theta})}}} {\lambda(U_{i, W_p(\boldsymbol{\theta})})},
        \label{eq:multi_dim_wcp}
    \end{equation}
    where \( J_{W_p, \{u_{i, W_p(\boldsymbol{\theta})}\}_{i = 1}^{d-1}}(\boldsymbol{\theta}) \) denotes the Jacobian of \((W_p(\boldsymbol{\theta}), \{u_{i, W_p(\boldsymbol{\theta})}\}_{i = 1}^{d-1})\) evaluated at \(\boldsymbol{\theta}\), and \(\eta > 0\) is a user-specified hyperparameter.
\end{proposition}

The goal of introducing the area-preserving parameterization $\gamma_w$ of the level set $S_{w,\boldsymbol{\theta}}$ is to to generate a uniform distribution on it. This parameterization is a bijection from a ${d-1}$ dimensional Euclidean space $U_w$ to the ${d-1}$ dimensional hypersurface $S_{w,\boldsymbol{\theta}}$ such that for any two Borel sets $A_1, A_2 \subset U_w$, where $\lambda_{d-1}$ denotes the Lebesgue measure on $\mathbb{R}^{d-1}$, if ${\lambda_{d-1}(A_1) = \lambda_{d-1}(A_2)}$, then $\hbox{Area}_{d-1}(\gamma(A_1)) = \hbox{Area}_{d-1}(\gamma(A_2))$ and, in particular, $\lambda_{d-1}(U_w) = \hbox{Area}_{d-1}(S_{w,\boldsymbol{\theta}})$. Thus, if a random vector $u$ has a uniform distribution on $U_w$, then $\gamma_w(u)$ is uniformly distributed on $S_{w,\boldsymbol{\theta}}$.
The probability density of $\boldsymbol{u}$ is therefore $\frac{\mathbbm{1}_{\boldsymbol{u}\in U_{w}}}{\hbox{Area}_{d-1}(S_{w,\boldsymbol{\theta}})} =
\prod_{i = 1}^{d-1} \frac{\mathbbm{1}_{u_{i,w}\in U_{i,w}}} {\lambda_{d-1}(U_{i,w})}$. 
Concrete recipes for deriving the multivariate WCP priors, which also further explain the idea of Proposition~\ref{prp:recipe_multi_pcwprior}, are provided below.

\begin{remark}\label{rem:domain_prod_intervals}
    The assumption that the domain of the parameterizations of the level curves $\gamma_w$ is a Cartesian product of open intervals is not necessary and is primarily used to obtain a uniform distribution on $S_{w,\boldsymbol{\theta}}$. However, if an explicit expression for the probability density function of the uniform distribution on $S_{w,\boldsymbol{\theta}}$ is available and based on $g_{U,w}(\cdot)$ which denotes parameters ${\widetilde{\boldsymbol{u}}_w = \widetilde{\boldsymbol{u}}_w(\boldsymbol{\theta}) = (\widetilde{u}_{1,w}(\boldsymbol{\theta}),\ldots,\widetilde{u}_{d-1,w}(\boldsymbol{\theta}))}$. Then, one can directly use it to replace equation~\eqref{multi_pcwprior_density_no_change_of_variables}, yielding:
    $$
    \pi(w,\boldsymbol{u}) = \frac{\eta\exp(-\eta w)}{1 - \exp(-\eta c)} g_{U,w}(\boldsymbol{u}).
    $$
    Now, assume that the map $\boldsymbol{\theta} \mapsto \widetilde{\boldsymbol{u}}_{W_p(\boldsymbol{\theta})}(\boldsymbol{\theta})$ is differentiable. Further, let $\boldsymbol{\Phi}: \boldsymbol{\Theta} \to \mathbb{R}^{d}$ be the map  $\boldsymbol{\Phi}(\boldsymbol{\theta}) = (W_p(\boldsymbol{\theta}), \widetilde{\boldsymbol{u}}_{W_p(\boldsymbol{\theta})}(\boldsymbol{\theta}))$. Then, the multivariate WCP prior is given by
    $$
    \pi(\boldsymbol{\theta}) = \frac{\eta\exp(-\eta W_p(\boldsymbol{\theta}))}{1 - \exp(-\eta c)} g_{U,W_p(\boldsymbol{\theta})}(\widetilde{\boldsymbol{u}}_{W_p(\boldsymbol{\theta})}(\boldsymbol{\theta})) \abs{\det J_{\boldsymbol{\Phi}}(\boldsymbol{\theta})},
    $$
    where $J_{\boldsymbol{\Phi}}(\boldsymbol{\theta})$ is the jacobian matrix of $\boldsymbol{\Phi}$ evaluated at $\boldsymbol{\theta}$.
\end{remark}

We will now provide an explicit recipe for computing the quantities in Proposition \ref{prp:recipe_multi_pcwprior}. We assume that \(\boldsymbol{\Theta} = \prod_{i=1}^d \Theta_i\), where each \(\Theta_i\) is an open interval for \(i = 1, \ldots, d\). 
Let $\mu_b$ be the base measure and suppose that the base parameter set $\boldsymbol{\Theta}_b$ is connected and is contained in \(\overline{\boldsymbol{\Theta}} \setminus \boldsymbol{\Theta}\). 

\begin{recipe}
    \label{multi_pcwprior_recipe}
    Suppose we have a model with \(n\) parameters \(\boldsymbol{\theta} = (\theta_1, \ldots, \theta_n)\), and the level set \(S_{w, \boldsymbol{\theta}}\) admits a parameterization
    \[
    \alpha(\boldsymbol{\theta}_{-n}; w) = (\alpha_1(\boldsymbol{\theta}_{-n}; w), \ldots, \alpha_n(\boldsymbol{\theta}_{-n}; w))
    \]
    for each \(w > 0\), where \(\boldsymbol{\theta}_{-n} = [\theta_1, \ldots, \theta_{n-1}]\).
    
    \begin{enumerate}
        \item Compute \(\abs{d\alpha(\boldsymbol{\theta}_{-n}; w)} = \sqrt{\det[J_{\alpha}(\boldsymbol{\theta}_{-n})^{\top} J_{\alpha}(\boldsymbol{\theta}_{-n})]} \), where \(J_{\alpha}(\boldsymbol{\theta}_{-n})\) denotes the Jacobian of \(\alpha\) evaluated at \(\boldsymbol{\theta}_{-n}\).
    
        \item Compute \(\{u_{i,w}\}_{i = 1}^{n-1}\) as
        \begin{align*}
            u_{1,w} &= u_1(\theta_1; w) \\
            &= \frac{\int_{\inf \Theta_1}^{\theta_1} \int_{\boldsymbol{\theta}_{2:n-1}} \abs{d\alpha(x, \theta_{2:n-1}; w)} \, \text{d}x \, \text{d}\theta_2 \cdots \text{d}\theta_{n-1}}{\int_{\boldsymbol{\Theta}} \abs{d\alpha(x, \theta_{2:n-1}; w)} \, \text{d}x \, \text{d}\theta_2 \cdots \text{d}\theta_{n-1}}, \\
            u_{i,w} &= u_i(\theta_i; \theta_{1:i-1}, w) \\
            &= \frac{\int_{\inf \Theta_i}^{\theta_i} \int_{\boldsymbol{\theta}_{i+1:n-1}} \abs{d\alpha(\theta_{1:i-1}, x, \theta_{i+1:n-1}; w)} \, \text{d}x \, \text{d}\theta_{i+1} \cdots \text{d}\theta_{n-1}}{\int_{\boldsymbol{\theta}_{i:n-1}} \abs{d\alpha(\theta_{1:i-1}, x, \theta_{i+1}, \ldots, \theta_{n-1}; w)} \, \text{d}x \, \text{d}\theta_{i+1} \cdots \text{d}\theta_{n-1}}
        \end{align*}
        for \(i = 2, \ldots, n-1\). Here, \(\theta_{i:j}\) denotes the vector \([\theta_i, \theta_{i+1}, \ldots, \theta_j]\) and \(\int_{\boldsymbol{\theta}_{i:j}}\) denotes an integral over the Cartesian product of \(\boldsymbol{\theta}_i, \boldsymbol{\theta}_{i+1}, \ldots, \boldsymbol{\theta}_j\). \label{item:u_i}
    
        \item Follow Equations \eqref{analytic_2d_GP} and \eqref{eq:WCP2_beta} to obtain the WCP prior density of \(\boldsymbol{\theta}\). \label{item:step3}
    \end{enumerate}    
\end{recipe}

For models with two parameters, the level set $S_{w,\boldsymbol{\theta}}$ is a level curve and $\boldsymbol{\Theta} \subset \mathbb{R}^2$.
We can create a Cartesian coordinate system for the two parameters, $\theta_1$ and $\theta_2$, with each one representing one axis.
The two parameters corresponding to a base model should be a point in that coordinate system.
Without loss of generality, we can choose a parameterization so that the point is the origin of the coordinate system.
Let $w = W_p(\theta_1,\theta_2)$ be the Wasserstein distance between a flexible model with parameters at $(\theta_1,\theta_2)$ and the base model.
Proposition~\ref{prp:recipe_multi_pcwprior} requires us to assign uniform distributions over each level curve and a (truncated) exponential distribution on the Wasserstein distance.
An area-preserving parameterization in this case means a parameterization of the level curve by arc-length.

In some cases, it might be difficult to find a parameterization by arc-length of level curves. The following recipe provides a solution for how to derive the WCP$_p$ priors in the bivariate case when each level curve is a graph of a function. 

\begin{recipe}
	\label{arclength_recipe}
	Suppose that for each $w>0$, the level curve $S_{w,\boldsymbol{\theta}}$ is compact and is a graph of a function. In particular, by exchanging the order of $\theta_1$ and $\theta_2$ if necessary, it can be parameterized as $\alpha(\theta_1,\theta_2;w) = (\theta_1,f(\theta_1;w))$, where
	$f(\cdot;w)$ is a function of $\theta_1$ that depends on the Wasserstein distance $w$. 
	Let $(\theta_1^o,\theta_1^e)$ denote the domain of $\theta_1$
	and let ${s = u_1(\theta_1;w)}$ denote the arc length from $(\theta_1^o,f(\theta_1^o;w))$ to $(\theta_1,f(\theta_1;w))$. Recall that ${c := \sup_{\boldsymbol{\theta}\in\boldsymbol{\Theta}}W_p(\boldsymbol{\theta})}$.
	The steps to derive the bivariate WCP$_p$ prior are:
	\begin{enumerate}
		\item Compute $u_1(\theta_1;w)$ and the total arc length $l(w)$ as
		\begin{align*}
		u_1(\theta_1;w) &= \int_{\theta_1^o}^{\theta_1}\sqrt{1+\left({\mathrm{d}f(x;w)}/{\mathrm{d}x}\right)^2} \mathrm{d}x,\\
		l(w) &= \int_{\theta_1^o}^{\theta_1^e} \sqrt{1+\left({\mathrm{d}f(x;w)}/{\mathrm{d}x}\right)^2} \mathrm{d}x.
        \end{align*}
		
		\item Compute the Jacobian determinant
		
		\begin{equation}\label{eq:determinant_jacobian}
		   \det J_{W_p,u_1}(\theta_1,\theta_2) = 
			\left|\begin{array}{cc}
				\frac{\partial W_p(\theta_1,\theta_2)}{\partial \theta_1} & \frac{\partial W_p(\theta_1,\theta_2)}{\partial \theta_2}\\
				\frac{\partial u_1(\theta_1;W_p(\theta_1,\theta_2))}{\partial \theta_1} & \frac{\partial u_1(\theta_1;W_p(\theta_1,\theta_2))}{\partial \theta_2} 
			\end{array}\right|.
		\end{equation}
		
		\item \label{arclength_recipe:3} Compute the density of the bivariate WCP prior of $(\theta_1,\theta_2)\in\boldsymbol{\Theta}$ as
		\begin{equation}
			\label{2dpcwdensity}
			\pi_{\theta_1,\theta_2}(\theta_1, \theta_2) = \frac{\eta\exp(-\eta W_p(\theta_1,\theta_2))}{1-\exp(-\eta c)}\frac{\abs{\det J_{W_p,u_1}(\theta_1,\theta_2)}}{{l(W_p(\theta_1,\theta_2))}}. 
		\end{equation}
	\end{enumerate}
\end{recipe}
It can be noted that a parameterization by arc-length is $\gamma(s) = \alpha \circ u_1^{-1}(s)$. This might be difficult to compute because it involves the inversion of $u_1$, but this inversion is not needed in order to compute the WCP prior through the recipe. 

We have the following proposition, whose proof is immediate, showing the expression of the WCP$_p$ prior in the sense of Definition \ref{def:multi_wcp_ref_par} for regions that are not necessarily of the form $I\times J$, where $I$ and $J$ are intervals, but a WCP$_p$ prior is available for some reference parameterization. 

\begin{proposition}\label{cor:general_regions}
	Let $\hat{\boldsymbol{\Theta}} = \hat{\boldsymbol{\Theta}}_1\cup\cdots\cup\hat{\boldsymbol{\Theta}}_k\subset\mathbb{R}^2$, where $k\in\mathbb{N}$. Further, assume that for each $j$, $\hat{\boldsymbol{\Theta}}_j$ satisfies the assumptions of Definition \ref{def:multi_wcp_ref_par}, for some function $g_j(\cdot)$. Finally, assume that for each $i\neq j$, $\lambda(\hat{\boldsymbol{\Theta}}_i\cap \hat{\boldsymbol{\Theta}}_j)=0$, where $\lambda(\cdot)$ is the Lebesgue measure and $i,j=1,\ldots,k$. Then,
	$$\pi_{{\theta}_1,{\theta}_2}(\hat{\theta}_1,\hat{\theta}_2) = \sum_{j=1}^k |\det J_{g_j}(\hat{\theta}_1,\hat{\theta}_2)| \pi(g_j(\hat{\theta}_1,\hat{\theta}_2)) \mathbbm{1}_{\hat{\theta}_1,\hat{\theta}_2\in \hat{\boldsymbol{\Theta}}_j}.$$
	In particular, one can also apply Recipe \ref{arclength_recipe} for conic regions such as the following: $$\{(x,y)\in\mathbb{R}^2: (x,y) = (r\cos(\theta),r\sin(\theta)), r >0, \theta \in [0,\phi]\}, \phi \in (0,2\pi).$$
\end{proposition}
\begin{proof}
    The result follows directly from the change of variables formula, together with an application of Sard's theorem to drop the requirement that ${\det J_\Psi(\cdot)\neq 0}$ \cite[e.g.,][p.72]{spivak1965calculus}. 
\end{proof}

\section{WCP$_2$ priors for linear regression}\label{app:wcp4regression}

We first show how to derive the step-wise WCP$_2$ for $\boldsymbol{\beta} = (\beta_1, \ldots, \beta_n)^\top \in \mathbb{R}^n$. 
The simplest model is when $\boldsymbol{\beta} = \boldsymbol{0}$, thus $\boldsymbol{\Theta}_b = \boldsymbol{0}$.
For $\beta_{1}$, we compute its WCP$_2$ prior conditioned on all the other parameters taking their base model value $\beta_i = 0, i = 2,...,n$.
That is to penalize the Wasserstein-2 distance between: $\mathcal{N}(\boldsymbol{0}, \sigma^2 \boldsymbol{I}_N)$ and $\mathcal{N}(\beta_1 \boldsymbol{X}_{(1)}, \sigma^2 \boldsymbol{I}_N)$, where $\boldsymbol{X}_{(1)}$ denotes the first column of $\boldsymbol{X}$ and $\boldsymbol{I}_N$ is an $N\times N$ identity matrix.
We then derive a WCP$_2$ prior for each $\beta_{i}$ given $\beta_{1}, \ldots, \beta_{i-1}, \beta_{i+1} = 0, \ldots, \beta_{m} = 0$ for $i = 2,...,n$.
That is to penalize the Wasserstein-2 distance between: $\mathcal{N}(\sum_{j=1}^{i - 1} \boldsymbol{X}_{(j)} \beta_{j}, \sigma^2 \boldsymbol{I}_N)$ and $\mathcal{N}(\sum_{j=1}^{i} \boldsymbol{X}_{(j)} \beta_{j}, \sigma^2 \boldsymbol{I}_N)$, where $\boldsymbol{X}_{(i)}$ denotes the column $i$ of $\boldsymbol{X}$.
Thus, the condition WCP$_2$ for $\beta_i$ is \[\pi_{\beta_i}(\beta_i) = \frac{\eta_i}{2} \| \boldsymbol{X}_{(i)} \|_{\mathbb{R}^N} \exp \left( -\eta_i \cdot \norm{ \boldsymbol{X}_{(i)} }_{\mathbb{R}^N} |\beta_i| \right)\] for $i = 1,...,n$, where $\eta_i$ is a user-specified parameter.
The final step-wise WCP$_2$ for $\boldsymbol{\beta}$ is \(\pi_{\boldsymbol{\beta}}(\boldsymbol{\beta}) = \prod_{i = 1}^n \pi_{\beta_i}(\beta_i)\).
This proves the step-wise WCP$_2$ prior expression for $\boldsymbol{\beta}$.

We will now derive the multivariate WCP$_2$ prior for $\boldsymbol{\beta}$. Begin by observing that $$W_2(\boldsymbol{\beta}) = W_2(\mathcal{N}(\boldsymbol{X} \boldsymbol{\beta}, \sigma^2 \boldsymbol{I}_N), \mathcal{N}( \boldsymbol{0}, \sigma^2 \boldsymbol{I}_N)) = \norm{\boldsymbol{X}\boldsymbol{\beta}}_{\mathbb{R}^n} = \sqrt{\boldsymbol{\beta}^\top \boldsymbol{X}^\top \boldsymbol{X} \boldsymbol{\beta}}.$$ Since $\boldsymbol{X}^\top \boldsymbol{X}$ is a symmetric and non-negative definite matrix, the level sets $S_{w,\boldsymbol{\beta}}$ are given by the ellipsoids of the form $\{\boldsymbol{\beta}\in\mathbb{R}^n: \boldsymbol{\beta}^\top \boldsymbol{X}^\top \boldsymbol{X} \boldsymbol{\beta} = w^2\}$. Let $\boldsymbol{X}^\top \boldsymbol{X} = \boldsymbol{P}^\top \boldsymbol{\Lambda} \boldsymbol{P}$ be the eigendecomposition of $\boldsymbol{X}^\top \boldsymbol{X}$, where $\boldsymbol{P}$ is an orthogonal matrix and $\boldsymbol{\Lambda}$ is a diagonal matrix with non-negative eigenvalues. Then, the level sets can be parameterized as ${\{\boldsymbol{\beta}\in\mathbb{R}^n: \boldsymbol{\beta}^\top \boldsymbol{P}^\top \boldsymbol{\Lambda} \boldsymbol{P} \boldsymbol{\beta} = w^2\}}$. By defining $\widetilde{\boldsymbol{\beta}} = \boldsymbol{P} \boldsymbol{\beta}$, with $\widetilde{\boldsymbol{\beta}} = (\widetilde{\beta}_1, \ldots, \widetilde{\beta}_n)$, we can rewrite the level sets as 
\begin{equation}
    \label{eq:ellipsoid_level_set}
    S_{w,\boldsymbol{\beta}} = \left\{\widetilde{\boldsymbol{\beta}}\in\mathbb{R}^n: \sum_{j=1}^n\lambda_j\widetilde{\beta}_j^2 = w^2\right\}.
\end{equation}
Now, we can use the spherical parameterization (also known as hyperspherical parameterization or polar parameterization) of the ellipsoids:
\begin{align*}
\widetilde{\beta}_1 &= \frac{w\cos(\varphi_1)}{\sqrt{\lambda_1}}, \quad \widetilde{\beta}_2 = \frac{w\sin(\varphi_1)\cos(\varphi_2)}{\sqrt{\lambda_2}} \\
&\vdots \\
\widetilde{\beta}_{n-1} &= \frac{w\sin(\varphi_1)\sin(\varphi_2)\cdots\sin(\varphi_{n-2})\cos(\varphi_{n-1})}{\sqrt{\lambda_{n-1}}} \\
\widetilde{\beta}_n &= \frac{w\sin(\varphi_1)\sin(\varphi_2)\cdots\sin(\varphi_{n-2})\sin(\varphi_{n-1})}{\sqrt{\lambda_n}}
\end{align*}
where $\varphi_1, \ldots, \varphi_{n-2} \in [0, \pi]$ and $\varphi_{n-1} \in [0, 2\pi)$. Furthermore, observe that $\boldsymbol{\beta} = \boldsymbol{P}^\top \widetilde{\boldsymbol{\beta}}$, so that by letting $\boldsymbol{\varphi} = (\varphi_1, \ldots, \varphi_{n-1})$, we have
\begin{equation}\label{eq:Xw_ellipsoid}
X_w(\boldsymbol{\varphi}) = w \boldsymbol{P}^\top \begin{pmatrix}
\cos(\varphi_1)/\sqrt{\lambda_1} \\
\sin(\varphi_1)\cos(\varphi_2)/\sqrt{\lambda_2} \\
\vdots \\
\sin(\varphi_1)\sin(\varphi_2)\cdots\sin(\varphi_{n-2})\cos(\varphi_{n-1})/\sqrt{\lambda_{n-1}} \\
\sin(\varphi_1)\sin(\varphi_2)\cdots\sin(\varphi_{n-2})\sin(\varphi_{n-1})/\sqrt{\lambda_n}.
\end{pmatrix}.
\end{equation}
In particular, we have that 
$$X_w(\boldsymbol{\varphi}) = w \boldsymbol{P}^\top \boldsymbol{\Lambda}^{-1/2} Z_w(\boldsymbol{\varphi}),$$
where $Z_w(\boldsymbol{\varphi})$ is the usual spherical parameterization of the unit sphere:
$$Z_w(\boldsymbol{\varphi}) =  \begin{pmatrix}
    \cos(\varphi_1) \\
    \sin(\varphi_1)\cos(\varphi_2) \\
    \vdots \\
    \sin(\varphi_1)\sin(\varphi_2)\cdots\sin(\varphi_{n-2})\cos(\varphi_{n-1}) \\
    \sin(\varphi_1)\sin(\varphi_2)\cdots\sin(\varphi_{n-2})\sin(\varphi_{n-1}).
    \end{pmatrix}.$$
It is well-known that the map $(w,\boldsymbol{\varphi}) \mapsto Z_w(\boldsymbol{\varphi})$ is a local diffeomorphism, thus the map defined by ${\boldsymbol{\Psi}: (w,\boldsymbol{\varphi}) \mapsto X_w(\boldsymbol{\varphi})}$ is a local diffeomorphism. Therefore, from Proposition~\ref{prp:single_param_wcp_D_matrix}, the WCP$_2$ prior for $\boldsymbol{\beta}$ is given by
\begin{equation}
    \pi_{\boldsymbol{\beta}}(\boldsymbol{\beta}) = \frac{\eta\exp\left(-\eta \|\boldsymbol{X}\boldsymbol{\beta}\|_{\mathbb{R}^n}\right)}{\hbox{Area}_{n-1}(S_{W_2(\boldsymbol{\beta}), \boldsymbol{\beta}})\sqrt{D(W_2(\boldsymbol{\beta}), X_{W_2(\boldsymbol{\beta})}^{-1}(\boldsymbol{\beta}))}}.\nonumber
\end{equation}
Now, observe that $\partial X_w(\boldsymbol{\varphi})/\partial w = X_w(\boldsymbol{\varphi})/w$, so that
$$
    D(w, \boldsymbol{\varphi}) = \frac{1}{w^2} X_{w}(\boldsymbol{\varphi})^\top \boldsymbol{P}_{w}(\boldsymbol{\varphi}) X_{w}(\boldsymbol{\varphi}),
$$
where 
$$
    \boldsymbol{P}_{w}(\boldsymbol{\varphi}) = \boldsymbol{I}_{n-1} - J_{X_{w}}(\boldsymbol{\varphi})(J_{X_{w}}(\boldsymbol{\varphi})^\top J_{X_{w}}(\boldsymbol{\varphi}))^{-1}\!J_{X_{w}}(\boldsymbol{\varphi})^\top.
$$
Thus, to compute $D(W_2(\boldsymbol{\beta}), X_{W_2(\boldsymbol{\beta})}^{-1}(\boldsymbol{\beta}))$ we need the following ingredients: $W_2(\boldsymbol{\beta}) = \|\boldsymbol{X}\boldsymbol{\beta}\|_{\mathbb{R}^n}$, $X_{W_2(\boldsymbol{\beta})}(X_{W_2(\boldsymbol{\beta})}^{-1}(\boldsymbol{\beta})) = \boldsymbol{\beta}$, the angles $\boldsymbol{\varphi} = (\varphi_1, \ldots, \varphi_{n-1})$ that are computed as:
   \[
   \varphi_k = \arctan2\left(\sqrt{\sum_{i=k+1}^n \lambda_i \widetilde{\beta}_i^2}, \sqrt{\lambda_k}\widetilde{\beta}_k\right) \quad \text{for } k = 1, \ldots, n-1.
   \]
where $\arctan2(\cdot, \cdot)$ is the two-argument arctangent function, and $J_{X_w}(\boldsymbol{\varphi}) = w \boldsymbol{P}^\top \boldsymbol{\Lambda}^{-1/2} J_{Z_w}(\boldsymbol{\varphi})$, where $J_{Z_w}(\boldsymbol{\varphi}) = (J_{jk})_{j=1,k=1}^{j=n,k=n-1}$ is the Jacobian matrix of the spherical parameterization, which is given by:
$$J_{jk} = 
\begin{cases} 
0 & \text{if } j < k, \\
-w \sin(\varphi_k) \prod_{m=1}^{k-1} \sin(\varphi_m) & \text{if } j = k, \\
w \cos(\varphi_k) \cos(\varphi_j) \prod_{\substack{m=1 \\ m \neq k}}^{j-1} \sin(\varphi_m) & \text{if } k+1 \leq j \leq n-1, \\
w \cos(\varphi_k) \prod_{\substack{m=1 \\ m \neq k}}^{n-1} \sin(\varphi_m) & \text{if } j = n.
\end{cases}$$ 
With these ingredients, we can now compute $D(W_2(\boldsymbol{\beta}), X_{W_2(\boldsymbol{\beta})}^{-1}(\boldsymbol{\beta}))$. 
All that remains is to compute the area of the level set $S_{W_2(\boldsymbol{\beta}), \boldsymbol{\beta}}$. To this end, we have:
\begin{lemma} 
    \label{lem:area_ellipsoid}
    Let $S_{w,\boldsymbol{\beta}}$ be the ellipsoid in \eqref{eq:ellipsoid_level_set}. Then, its the surface area is 
    $$
    \hbox{Area}_{n-1}(S_{W_2(\boldsymbol{\beta}), \boldsymbol{\beta}}) = \frac{2\pi^{n/2} W_2(\boldsymbol{\beta})^{n-1}}{\Gamma\left(\frac{n}{2}\right)\prod_{k=1}^{n-1} \lambda_{k}} F_D^{(n-1)}\left(-\frac{1}{2};\frac{1}{2};\frac{n}{2};\alpha_1,\ldots,\alpha_{n-1}\right),
    $$
    where $\alpha_i = 1 - \frac{\lambda_{(i)}^2}{\lambda_{(n)}^2}$, $\lambda_{(1)}, \ldots, \lambda_{(n-1)}$ are the ordered $\lambda_i$ in increasing order and $F_D^{(n-1)}$ is the Lauricella hypergeometric function is defined as follows:
    \begin{align*}
    F_D^{(n)}(a; b; c; x_1, \ldots, x_n) = 
    \frac{\Gamma(c)}{\Gamma(a)\Gamma(c-a)}\int_0^1 \frac{(1-u)^{c-a-1}}{u^{1-a}}\prod_{i=1}^n (1-ux_i)^{-b}du,
    \end{align*}
    where $a, b, c, x_i \in \mathbb{R}$, with $|x_i| < 1$, $i = 1, \ldots, n$, $a > 0$, $c > a$, and $n\in\mathbb{N}$.    
\end{lemma}
\begin{proof}
The result follows by using the explicit expression for the surface area of ellipsoids given in \cite{rivin2007surface} and the identity given in \citet[Proposition 2.4]{krason2020linear}\footnote{This expression has been obtained in another fashion in \url{https://analyticphysics.com/Higher\%20Dimensions/Ellipsoids\%20in\%20Higher\%20Dimensions.htm}}.
\end{proof}


\section{Proofs of results}\label{app:proofs}
In this section we provide proofs of the results. If a proof is omitted is because it is straightforward.

\begin{proof}[Proof of Proposition~\ref{example1}]
    It is a direct consequence of Proposition \ref{prp:recipe_multi_pcwprior} given in Appendix \ref{app:wasserstein}.
\end{proof}

\begin{proof}[Proof of Proposition~\ref{example2}]
    Since \(\mu_b\) and \(\mu_m\) differ only by a location parameter, and using Remark~\ref{remarkwformula} in Appendix \ref{app:wasserstein}, we have \(W_p(\mu_b, \mu_m) = |m|\). Because \(W_p(\mu_b, \mu_m)\) tends to infinity as \(m \to \infty\), Definition~\ref{pcwprior_defi} directly provides the desired expression.
\end{proof}
\begin{proof}[Proof of Proposition~\ref{prp:AR_wcp_density}]
Let $\boldsymbol{\Sigma}_0$ and $\boldsymbol{\Sigma}$ denote the covariance matrices of the base and flexible models for the process $\{X_t, t=1, \dots, n\}$, with $n < \infty$. These matrices are given, respectively, by $\boldsymbol{\Sigma}_0$ with all entries equal to $\sigma^2$ and $\boldsymbol{\Sigma} = (\phi^{|i-j|} \sigma^2)_{i,j=1}^n$.

To compute the squared Wasserstein-2 distance $W_2^2(\mu, \mu_b)$ between the corresponding Gaussian measures, we use \eqref{w2dist2gaussianfd}:
$$
W_2^2(\mu, \mu_b) = 2n\sigma^2 - 2\, \text{tr}\left\{ \left( \boldsymbol{\Sigma}_0^{1/2} \boldsymbol{\Sigma} \boldsymbol{\Sigma}_0^{1/2} \right)^{1/2} \right\}.
$$
Since $\boldsymbol{\Sigma}_0$ is a rank-1 matrix with all entries equal to $\sigma^2$, its square root is $\boldsymbol{\Sigma}_0^{1/2} = \frac{\sigma}{\sqrt{n}} \mathbf{e}\mathbf{e}^T$, where $\mathbf{e}$ is the $n$-dimensional vector of ones. Then, the product $\boldsymbol{\Sigma}_0^{1/2} \boldsymbol{\Sigma} \boldsymbol{\Sigma}_0^{1/2}$ simplifies to 
$
\frac{\sigma^2}{n} \left( \mathbf{e}^T \boldsymbol{\Sigma} \mathbf{e} \right) \mathbf{e}\mathbf{e}^T,
$
which is a scalar multiple of $\boldsymbol{\Sigma}_0$. 

The quantity $\mathbf{e}^T \boldsymbol{\Sigma} \mathbf{e}$ is the sum of all entries in $\boldsymbol{\Sigma}$. Since $\boldsymbol{\Sigma}$ is Toeplitz with entries $\phi^{|i-j|}$, we have
\[
\mathbf{e}^T \boldsymbol{\Sigma} \mathbf{e} = \sigma^2 \left( n + 2 \sum_{k=1}^{n-1} (n - k)\phi^k \right)
= \sigma^2 \frac{n(1-\phi^2) - 2\phi(1-\phi^n)}{(1-\phi)^2},
\]
where we in the second equality used the formula for the sum of a finite weighted geometric series.
Taking the square root of the scalar multiple and computing its trace, we obtain
\[
\text{tr}\left\{ \left( \boldsymbol{\Sigma}_0^{1/2} \boldsymbol{\Sigma} \boldsymbol{\Sigma}_0^{1/2} \right)^{1/2} \right\} = \frac{\sigma \sqrt{n(1-\phi^2) - 2\phi(1-\phi^n)}}{1-\phi}.
\]

Substituting this into the expression for $W_2^2(\mu, \mu_b)$ gives the final result:
$$
W_2^2(\mu, \mu_b) = 2\sigma^2 \left( n - \frac{\sqrt{n(1-\phi^2) - 2\phi(1-\phi^n)}}{1-\phi} \right).
$$

Finally, note that $W_2(\mu, \mu_b)$ increases as $\phi$ decreases, and remains bounded above by the constant ${c = \sigma \left( 2n - \sqrt{2} \sqrt{1 - (-1)^n} \right)^{1/2} < \infty}$. The result then follows directly by applying Remark~\ref{rem:wcp_oneside} and simplifying.

\end{proof}

\begin{proof}[Proof of Proposition~\ref{prp:gen_pareto_wcp_density}]
    By Remark \ref{remarkwformula}, the Wasserstein-1 distance between the base model and the flexible model with parameter $\xi$ is $W_1(\xi) = \xi/(1-\xi)$. The result follows by using Remark~\ref{rem:wcp_oneside}.
\end{proof}

\begin{proof}[Proof of Proposition~\ref{2d_gaussian_wcp}]
	By Remark \ref{remarkwformula}, the Wasserstein-2 distance between the base measure (Dirac measure concentrated at zero) and a flexible model with mean $m$ and standard deviation $\sigma$ is $W_2(m,\sigma) = (m^2 + \sigma^2)^{1/2}$, which coincides with the Euclidean distance on $\mathbb{R}\times (0,\infty)$. 
	For any fixed value of $W_2(m,\sigma) = w > 0$, the level curve $S_{w,\boldsymbol{\theta}}$ is a semi-circle with radius $w$. A parameterization for $S_{w,\boldsymbol{\theta}}$ is given by $X_{w,\boldsymbol{\theta}}(u) = (u,\sqrt{w^2 - u^2})$. Now, observe that $X_w^{-1}(m,\sigma) = m$, which gives us $\boldsymbol{\Phi}(m,\sigma) = (\sqrt{m^2+\sigma^2}, m)$, and by Remark~\ref{remark:diffeomorphism}, it is enough to show that $\boldsymbol{\Phi}$ is a local diffeomorphism. First, observe that the domain of $\boldsymbol{\Phi}$ is $\mathbb{R}\times (0,\infty)$, which is an open set. Second, the Jacobian matrix of $\boldsymbol{\Phi}$ is 
    $$J_{\boldsymbol{\Phi}}(m,\sigma) = \begin{pmatrix}
    \frac{m}{\sqrt{m^2+\sigma^2}} & \frac{\sigma}{\sqrt{m^2+\sigma^2}} \\
    1 & 0
    \end{pmatrix} \Rightarrow |\det J_{\boldsymbol{\Phi}}(m,\sigma)| = \frac{\sigma}{\sqrt{m^2+\sigma^2}}.
    $$
    Further, we have that $J_{X_w}(u) = (1, u/\sqrt{w^2-u^2})$ so that 
    $$J_{X_{W_p(m,\sigma)}}(X_{W_p(m,\sigma)}^{-1}(m,\sigma))= J_{\sqrt{m^2+\sigma^2}}(m)= \left(1, \frac{m}{\sigma}\right)$$
    and
    $$\boldsymbol{G}(m,\sigma) = \sqrt{1 + \left(\frac{m}{\sigma}\right)^2} = \frac{\sqrt{m^2+\sigma^2}}{\sigma}.$$
    Finally, $\hbox{Area}_1(S_{w,\boldsymbol{\theta}})$ is the arc-length of a semi-circle with radius $W_p(m,\sigma)$, so that we have $\hbox{Area}_1(S_{w,\boldsymbol{\theta}}) = \pi\sqrt{m^2+\sigma^2}$. Combining all the elements, we obtain
    $$\pi(m,\sigma) = \frac{\eta\exp(-\eta (m^2+\sigma^2)^{1/2})}{{\pi}(m^2+\sigma^2)^{1/2}} .$$
\end{proof}

\begin{remark}
    Alternatively, one could use the well-known parameterization of $S_{w,\boldsymbol{\theta}}$ by arc-length given by $X_w(\varphi) = (w\cos(\varphi), w\sin(\varphi))$, where $\varphi \in (0,\pi)$ denotes the polar angle. In this case, we have $\boldsymbol{G}(\boldsymbol{\theta})=1$.
\end{remark}

\begin{proof}[Proof of Proposition~\ref{2d_GP_WCP}]
        In this case, the flexible models correspond to the parameters $\xi\in[0,1)$ and $\sigma\in (0,+\infty)$.
        Let $Z$ follow a generalized Pareto distribution with $\sigma=1$ and $\xi$. Then, generalized Pareto density $f_{\xi,\sigma}(y)$ is a location-scale family generated by $X$, where the scale parameter is $\sigma$ and the location parameter is $0$.
        According to Proposition \ref{loc-scale-prop}, the Wasserstein-1 distance between the base measure, which is Dirac, and a flexible model is ${W_1(\xi,\sigma) = \sigma \mathbb{E}X = \frac{\sigma}{1-\xi}}$. 
        By fixing $W_1(\xi,\sigma)$ to a positive value $w$, we obtain a level curve that can be parameterized by $\alpha(\sigma) = (\sigma, 1-\frac{\sigma}{w})$, which is a straight line in the Cartesian coordinate system.
        By following Recipe~\ref{arclength_recipe}, let $s = u_1(\sigma;w)$ denote the partial arc length of the level curve from the point $(0,1)$ to $(\sigma,1-\frac{\sigma}{w})$ as a function of $\sigma$. 
        We have that $u_1(\sigma;w) = \frac{\sigma}{w}{\sqrt{w^2+1}}$.
        Therefore, the full arc length of the level curve is $l(w) = \sqrt{w^2+1}$.
        By Recipe \ref{arclength_recipe}:\eqref{arclength_recipe:3}, we obtain the $\text{WCP}_1$ prior of $(\xi,\sigma)$ as
        \begin{align*}
            \pi_{\sigma,\xi}(\sigma,\xi) &= \frac{\sqrt{(1-\xi)^2+ \sigma^2}}{(1-\xi)^2}\pi_{w,s}(\frac{\sigma}{1-\xi},\sqrt{(1-\xi)^2 + \sigma^2})\\
            &= \frac{\eta}{1-\xi}\exp(-\eta\frac{\sigma}{1-\xi}).
        \end{align*}
    \end{proof} 

\bibliographystyle{abbrvnat}
\bibliography{reference}

\begin{thebibliography}{41}
\providecommand{\natexlab}[1]{#1}
\providecommand{\url}[1]{\texttt{#1}}
\expandafter\ifx\csname urlstyle\endcsname\relax
  \providecommand{\doi}[1]{doi: #1}\else
  \providecommand{\doi}{doi: \begingroup \urlstyle{rm}\Url}\fi

\bibitem[Berger(1985)]{berger1985statistical}
J.~O. Berger.
\newblock \emph{Statistical decision theory and Bayesian analysis}.
\newblock Springer-Verlag, New York, 1985.

\bibitem[Berger and Bernardo(1989)]{berger1989estimating}
J.~O. Berger and J.~M. Bernardo.
\newblock Estimating a product of means: {Bayesian} analysis with reference priors.
\newblock \emph{J. Am. Stat. Assoc.}, 84\penalty0 (405):\penalty0 200--207, 1989.

\bibitem[Berger and Bernardo(1992{\natexlab{a}})]{berger1992development}
J.~O. Berger and J.~M. Bernardo.
\newblock On the development of the reference prior method.
\newblock \emph{Bayesian Stat.}, 4\penalty0 (4):\penalty0 35--60, 1992{\natexlab{a}}.

\bibitem[Berger and Bernardo(1992{\natexlab{b}})]{berger1992ordered}
J.~O. Berger and J.~M. Bernardo.
\newblock Ordered group reference priors with application to the multinomial problem.
\newblock \emph{Biometrika}, 79\penalty0 (1):\penalty0 25--37, 1992{\natexlab{b}}.

\bibitem[Bernardo(1979)]{bernardo1979reference}
J.~M. Bernardo.
\newblock Reference posterior distributions for {Bayesian} inference.
\newblock \emph{J. R. Stat. Soc. Ser. B Stat. Methodol.}, 41\penalty0 (2):\penalty0 113--128, 1979.

\bibitem[Bousquet and Bernardara(2021)]{bousquet2021extreme}
N.~Bousquet and P.~Bernardara.
\newblock \emph{Extreme value theory with applications to natural hazards}.
\newblock Springer, 2021.

\bibitem[Castillo et~al.(2015)Castillo, Schmidt-Hieber, and Van~der Vaart]{castillo2015bayesian}
I.~Castillo, J.~Schmidt-Hieber, and A.~Van~der Vaart.
\newblock Bayesian linear regression with sparse priors.
\newblock \emph{Ann. Stat.}, pages 1986--2018, 2015.

\bibitem[Chi and Reinsel(1989)]{chi1989models}
E.~M. Chi and G.~C. Reinsel.
\newblock Models for longitudinal data with random effects and ar (1) errors.
\newblock \emph{J. Am. Stat. Assoc.}, 84\penalty0 (406):\penalty0 452--459, 1989.

\bibitem[Consonni et~al.(2018)Consonni, Fouskakis, Liseo, and Ntzoufras]{consonni2018prior}
G.~Consonni, D.~Fouskakis, B.~Liseo, and I.~Ntzoufras.
\newblock Prior distributions for objective {Bayesian} analysis.
\newblock \emph{Bayesian Anal.}, 2018.

\bibitem[Csiszar(1975)]{infinite_kld}
I.~Csiszar.
\newblock {$I$-Divergence Geometry of Probability Distributions and Minimization Problems}.
\newblock \emph{Ann. Probab.}, 3\penalty0 (1):\penalty0 146 -- 158, 1975.

\bibitem[de~Carvalho et~al.(2022)de~Carvalho, Pereira, Pereira, and de~Zea~Bermudez]{de2022extreme}
M.~de~Carvalho, S.~Pereira, P.~Pereira, and P.~de~Zea~Bermudez.
\newblock An extreme value {Bayesian} lasso for the conditional left and right tails.
\newblock \emph{J. Agric. Biol. Environ. Stat.}, pages 1--18, 2022.

\bibitem[Dombry et~al.(2017)Dombry, Engelke, and Oesting]{dombry2017bayesian}
C.~Dombry, S.~Engelke, and M.~Oesting.
\newblock {Bayesian inference for multivariate extreme value distributions}.
\newblock \emph{Electron. J. Stat.}, 11\penalty0 (2):\penalty0 4813 -- 4844, 2017.

\bibitem[Fuglstad et~al.(2019)Fuglstad, Simpson, Lindgren, and Rue]{jasapcprior}
G.-A. Fuglstad, D.~Simpson, F.~Lindgren, and H.~Rue.
\newblock Constructing priors that penalize the complexity of {G}aussian random fields.
\newblock \emph{J. Am. Stat. Assoc.}, 114\penalty0 (525):\penalty0 445--452, 2019.

\bibitem[Givens and Shortt(1984)]{WDfiniteD}
C.~R. Givens and R.~M. Shortt.
\newblock A class of {W}asserstein metrics for probability distributions.
\newblock \emph{Michigan Math. J.}, 31\penalty0 (2):\penalty0 231--240, 1984.

\bibitem[Hans(2009)]{hans2009bayesian}
C.~Hans.
\newblock Bayesian lasso regression.
\newblock \emph{Biometrika}, 96\penalty0 (4):\penalty0 835--845, 2009.

\bibitem[Irpino and Verde(2015)]{irpino2015basic}
A.~Irpino and R.~Verde.
\newblock Basic statistics for distributional symbolic variables: a new metric-based approach.
\newblock \emph{Adv. Data Anal. Classif.}, 9\penalty0 (2):\penalty0 143--175, 2015.

\bibitem[Jaynes(1968)]{jaynes1968prior}
E.~T. Jaynes.
\newblock Prior probabilities.
\newblock \emph{IEEE Trans. Syst. Sci. Cybern.}, 4\penalty0 (3):\penalty0 227--241, 1968.

\bibitem[Jaynes(1983)]{Jaynes1983PapersOP}
E.~T. Jaynes.
\newblock Papers on probability, statistics and statistical physics.
\newblock \emph{Acta Appl. Math.}, 20:\penalty0 189--191, 1983.

\bibitem[Jeffreys(1946)]{jeffreys1946invariant}
H.~Jeffreys.
\newblock An invariant form for the prior probability in estimation problems.
\newblock \emph{Proc. R. Soc. Lond. A}, 186\penalty0 (1007):\penalty0 453--461, 1946.

\bibitem[Jones and Boadi-Boateng(1991)]{jones1991unequally}
R.~H. Jones and F.~Boadi-Boateng.
\newblock Unequally spaced longitudinal data with {AR}(1) serial correlation.
\newblock \emph{Biometrics}, pages 161--175, 1991.

\bibitem[Kass(1989)]{kass1989geometry}
R.~E. Kass.
\newblock The geometry of asymptotic inference.
\newblock \emph{Stat. Sci.}, pages 188--219, 1989.

\bibitem[Klein and Kneib(2016)]{10.1214/15-BA983}
N.~Klein and T.~Kneib.
\newblock {Scale-Dependent Priors for Variance Parameters in Structured Additive Distributional Regression}.
\newblock \emph{Bayesian Anal.}, 11\penalty0 (4):\penalty0 1071 -- 1106, 2016.

\bibitem[Kraso{\'n} and Milewski(2020)]{krason2020linear}
P.~Kraso{\'n} and J.~Milewski.
\newblock Linear relations for lauricella $ f\_d $ functions and symmetric polynomial.
\newblock \emph{arXiv preprint arXiv:2009.07467}, 2020.

\bibitem[Kullback and Leibler(1951)]{kullback1951information}
S.~Kullback and R.~A. Leibler.
\newblock On information and sufficiency.
\newblock \emph{Ann. Math. Stat.}, 22\penalty0 (1):\penalty0 79--86, 1951.

\bibitem[Laplace(1820)]{marquis1820theorie}
P.~S. Laplace.
\newblock \emph{Th{\'e}orie analytique des probabilit{\'e}s}, volume~7.
\newblock Courcier, 1820.

\bibitem[Lindgren and Rue(2015)]{lindgren2015software}
F.~Lindgren and H.~Rue.
\newblock Bayesian spatial modelling with {R-INLA}.
\newblock \emph{J. Stat. Softw.}, 63\penalty0 (19):\penalty0 1–25, 2015.

\bibitem[Opitz et~al.(2018)Opitz, Huser, Bakka, and Rue]{opitz2018inla}
T.~Opitz, R.~Huser, H.~Bakka, and H.~Rue.
\newblock \text{INLA} goes extreme: {Bayesian} tail regression for the estimation of high spatio-temporal quantiles.
\newblock \emph{Extremes}, 21\penalty0 (3):\penalty0 441--462, 2018.

\bibitem[Park and Casella(2008)]{park2008bayesian}
T.~Park and G.~Casella.
\newblock The {Bayesian} lasso.
\newblock \emph{J. Am. Stat. Assoc.}, 103\penalty0 (482):\penalty0 681--686, 2008.

\bibitem[Prado and West(2010)]{prado2010time}
R.~Prado and M.~West.
\newblock \emph{Time series: modeling, computation, and inference}.
\newblock Chapman and Hall/CRC, 2010.

\bibitem[{R Core Team}(2023)]{Rsoftware}
{R Core Team}.
\newblock \emph{R: A Language and Environment for Statistical Computing}.
\newblock R Foundation for Statistical Computing, Vienna, Austria, 2023.

\bibitem[Rivin(2007)]{rivin2007surface}
I.~Rivin.
\newblock Surface area and other measures of ellipsoids.
\newblock \emph{Adv. Appl. Math.}, 39\penalty0 (4):\penalty0 409--427, 2007.

\bibitem[Robert and Rousseau(2017)]{robert2017principled}
C.~P. Robert and J.~Rousseau.
\newblock How principled and practical are penalised complexity priors?
\newblock \emph{Stat. Sci.}, 32\penalty0 (1):\penalty0 36--40, 2017.

\bibitem[Robert et~al.(2007)]{robert2007bayesian}
C.~P. Robert et~al.
\newblock \emph{The {Bayesian} choice: from decision-theoretic foundations to computational implementation}, volume~2.
\newblock Springer, 2007.

\bibitem[Rue and Held(2005)]{rue2005gaussian}
H.~Rue and L.~Held.
\newblock \emph{Gaussian Markov random fields: theory and applications}.
\newblock Chapman and Hall/CRC, 2005.

\bibitem[Simpson et~al.(2017)Simpson, Rue, Riebler, Martins, and Sørbye]{PCpriororigin}
D.~Simpson, H.~Rue, A.~Riebler, T.~G. Martins, and S.~H. Sørbye.
\newblock Penalising model component complexity: A principled, practical approach to constructing priors.
\newblock \emph{Stat. Sci.}, 32\penalty0 (1):\penalty0 1 -- 28, 2017.

\bibitem[S{\o}rbye and Rue(2017)]{sorbye2017penalised}
S.~H. S{\o}rbye and H.~Rue.
\newblock Penalised complexity priors for stationary autoregressive processes.
\newblock \emph{J. Time Ser. Anal.}, 38\penalty0 (6):\penalty0 923--935, 2017.

\bibitem[Spivak(1965)]{spivak1965calculus}
M.~Spivak.
\newblock \emph{Calculus on manifolds: a modern approach to classical theorems of advanced calculus}.
\newblock Addison-Wesley, 1965.

\bibitem[{Stan Development Team}(2024)]{rstan}
{Stan Development Team}.
\newblock {RStan}: the {R} interface to {Stan}, 2024.
\newblock URL \url{https://mc-stan.org/}.
\newblock R package version 2.32.6.

\bibitem[Van~Niekerk et~al.(2021)Van~Niekerk, Bakka, and Rue]{van2021principled}
J.~Van~Niekerk, H.~Bakka, and H.~Rue.
\newblock A principled distance-based prior for the shape of the {Weibull} model.
\newblock \emph{Stat. Probab. Lett.}, 174:\penalty0 109098, 2021.

\bibitem[Ventrucci and Rue(2016)]{ventrucci2016penalized}
M.~Ventrucci and H.~Rue.
\newblock Penalized complexity priors for degrees of freedom in {B}ayesian p-splines.
\newblock \emph{Stat. Model.}, 16\penalty0 (6):\penalty0 429--453, 2016.

\bibitem[Villani(2009)]{optransoldnew}
C.~Villani.
\newblock \emph{Optimal transport}, volume 338 of \emph{Grundlehren der mathematischen Wissenschaften [Fundamental Principles of Mathematical Sciences]}.
\newblock Springer-Verlag, Berlin, 2009.
\newblock ISBN 978-3-540-71049-3.
\newblock Old and new.

\end{thebibliography}
\end{document}